\documentclass[aps,pra,graphicx,10pt]{revtex4-2}
\usepackage{amssymb,amsbsy,times,fancyhdr,color}
\usepackage{amsmath,amsthm}
\usepackage{mathrsfs,amsfonts}
\usepackage{latexsym,afterpage}
\usepackage{graphicx,subfigure,multirow}

\newtheorem{prop}{Proposition}

\usepackage{lineno}

\begin{document}


\title{Transient growth and Counter-propagating Rossby Waves in a sloped Eady problem} 


\author{J. Yin}
\email[]{jyinar@connect.ust.hk}
\affiliation{Department of Ocean Science, Hong Kong University of Science and Technology, Hong Kong}


\author{N. Harnik}


\author{E. Heifetz}
\affiliation{Porter school of the Environment and Earth Sciences, Tel Aviv University,
 69978, Israel}
 

\author{J. Mak}
\email[]{julian.c.l.mak@googlemail.com}
\affiliation{Department of Ocean Science, Hong Kong University of Science and Technology, Hong Kong}
\affiliation{Department of Mathematics, Hong Kong University of Science and Technology, Hong Kong}
\affiliation{National Oceanography Centre, Southampton, UK}




\begin{abstract}
A mechanistic interpretation for baroclinic normal mode instability with exponential growth is the interaction arising from Counter-propagating Rossby Waves (CRWs), where pairs of CRWs phase-lock and interfere constructively. On the other hand, an arbitrary initial condition evolving towards a phase-locked state, or even when the system does not allow phase-locking of CRWs, can lead to finite transient growth as long as there is constructive interference, which over a time window of interest could in principle exceed the gain associated with exponential growth for some choice of norm. The present work investigates the characteristics of finite time transient instabilities and its description in terms of CRWs in a baroclinic system with asymmetric interaction, in contrast to usual settings where there is an inherent symmetry that leads to equal and opposite CRWs. A sloped Eady problem on a $f$-plane is considered here, which has the benefit that there are only two CRWs in the system, and the presence of a bottom topographic slope can be formulated in a such a way that only one of the CRW is affected and controlled by a single parameter representing the asymmetry, via modifying the bottom background potential vorticity gradient. We derive conditions where the instantaneous growths have key CRW properties that are independent of the asymmetry parameter for a reasonably general family of norms, and that the instability characteristic in the total energy norm is entirely independent of the asymmetry parameter. For finite time transient growth, the optimal initial condition is in some sense counter to that of the exponential growing normal mode: where the bottom slope is such that it is strengthening the bottom CRW, the normal mode has a larger bottom CRW, but the finite time optimal initial condition wants the bottom CRW to be weak instead, and vice-versa. While the results presented here are for the sloped Eady problem, the same results were also found for an analogous Phillips problem, and we speculate the instability characteristics may have origins in the underlying symmetries that are common to those two problems.
\end{abstract}

\pacs{}

\maketitle 



\section{Introduction}

Many fluid systems in geophysical and astrophysical settings are influenced by rotation and buoyancy effects, and the dynamics and turbulence in these baroclinic systems are of interest in how it shapes stellar and planetary global circulation \cite[e.g.,][]{Vallis-GFD, Hughes-et-al-Tachocline, Read-et-al20}, for example. In the context of the Earth system, baroclinic turbulence impacts the dynamical features we observe in the atmosphere and the ocean, with implications particularly for climate modeling. The role of topography provides a strong constraint and shapes the currents and dynamics in the ocean, and there have been numerous works looking into understanding the mechanisms, conditions and characteristics for baroclinic instability \cite[e.g.,][]{BlumsackGierasch72, Mechoso80, Isachsen11, Brink12, ChenKamenkovich13, Isachsen15, Pedlosky16, Chen-et-al20, Tanaka21, Mak-et-al24, Lobo-et-al25}, its transition to turbulence and eventual saturation, and the associated eddy/wave-mean flow interaction in relation to parameterization \cite[e.g.,][]{ManucharyanIsachsen19, Chen-et-al20, Tanaka21, WeiWang21, Wei-et-al22, Wei-et-al24, DengWang24, NummelinIsachsen24, Sterl-et-al24, Sterl-et-al25, Lobo-et-al25, PudigSmith25, DengWang26}. For the present work we will focus on the instability aspect.

The three canonical set ups for the baroclinic instability problem are that due to Charney \cite{Charney47}, Eady \cite{Eady49} and Phillips \cite{Phillips54}, under the quasi-geostrophic approximation described by material advection of potential vorticity (PV) \cite[e.g.,][]{Vallis-GFD}. While the three problems illustrate different characteristics of baroclinic instability, one aspect that unifies them is that, mechanistically, they can interpreted to arise from the constructive interaction of Counter-propagating Rossby Waves (CRWs) \cite[e.g.,][]{Bretherton66a, Hoskins-et-al85, Heifetz-et-al04a}. Briefly, Rossby waves are supported on PV gradients associated with the basic state, and Rossby waves themselves have self-induced PV and so velocity anomalies associated with it that leads to self-advection (cf. individual planetary Rossby waves propagate in a retrograde fashion because of this mechanism). There can be cases where Rossby waves counter-propagate against a background mean flow, as well as a pair of Rossby waves that can mutually interact because of the induced velocity associated with the PV anomalies (action-at-a-distance, cf. electric charge and its electric field). In such cases where there is phase-locking as well as mutual constructive interaction, both waves grow in magnitude, and is a physical mechanism invoked for baroclinic normal mode instability with exponential growth (see Fig.~\ref{fig:fig1} for a schematic of the mechanism but in an asymmetric set up of interest here). The Charney--Stern--Pedlosky theorem \cite{CharneyStern62, Pedlosky64a, Pedlosky64b} is then argued to be a \emph{necessary} (but not sufficient) condition for existence of Rossby wave pairs that can counter-propagate because of the basic state set up; it is only a \emph{necessary} condition, because counter-propagation alone does not imply phase-locking or constructive mutual interference. The instabilities with exponential growths in the Charney, Eady and Phillips problems can then be interpreted as arising from interaction between a boundary and interior CRW, between two boundary CRWs, and between two interior CRWs respectively. We note that the latter two are usually considered in modern textbooks, partly due to them having simple closed form solutions: briefly, the Eady and the Phillips problem have PV gradients set up such that there are only two CRWs in the system (because of the $f$-plane set up in the Eady problem, and the two-layer quasi-geostrophic equation set up in the Phillips problem), while the Charney problem, formulated on a $\beta$-plane, have PV perturbations and thus CRWs throughout the domain (although the one at the critical layer is the most prominent, \cite[e.g.,][]{Heifetz-et-al04b}). Note that the CRW mechanism and associated stability conditions where applicable were demonstrated to operate in other shear flow instabilities (e.g., incompressible shear flow \cite[e.g.,][]{Heifetz-et-al99, Heifetz-et-al09, BiancofioreGallaire12}, stratified shear flows \cite[e.g.,][]{Harnik-et-al08, Rabinovich-et-al11, Carpenter-et-al11}, magnetohydrodynamic flows \cite[e.g.,][]{Heifetz-et-al15}), with differences in the wave properties and interaction mechanism depending on the physics of the system.

For baroclinic systems, the presence of a bottom slope could be accounted for in a simple fashion by making an assumption that the slope gradient is small (more precisely, of order Rossby number). In that case the effect of the slope makes an appearance as an additional term in the background PV gradient, and assumption of linear meridional slope results in a single parameter controlling the strength that either adds to or subtracts from the bottom background PV gradient. This approach was taken for a number of works on linear normal mode instability for the Eady problem \cite[e.g.,][]{BlumsackGierasch72, Isachsen11, Pedlosky16, Chen-et-al20, Mak-et-al24}. It was alluded to in the works of \cite{Isachsen11, Chen-et-al20} that the presence of meridional slopes would affect the CRWs and thus impact the normal mode instability characteristics, but \cite{Mak-et-al24} clarifies the mechanism in a way that is consistent over parameter space for the sloped Eady problem. For the setting where the slope \emph{strengthens} the bottom PV gradient, the bottom CRW becomes stronger, propagates faster \emph{as well as} leading to a stronger interaction; this case is roughly illustrated in Fig.~\ref{fig:fig1} of the present article. The associated strengthening can be offset by going to larger wavenumbers since shorter waves propagate slower and have a weaker interaction, at the expense of a reduction in the exponential growth rate; the offsetting can always occur, albeit over an increasingly small bandwidth in wavenumber and with decreasing exponential growth rates. The opposite is largely true for the case where the slope \emph{strengthens} the bottom PV gradient, except that while a weakened bottom CRW can be compensated by going to smaller wavenumbers, longer waves also propagate faster and can overwhelm the background flow contribution to counter-propagation; beyond a certain point, there is simply no counter-propagation possible at any wavenumber and so no exponential growth is possibly (cf. when Charney--Stern--Pedlosky no longer holds). The reader is referred to Fig.~8 of \cite{Mak-et-al24} for a schematic of the resulting CRW interaction, and the main observation there is that there is a controlling parameter leads to an \emph{asymmetry}, where the strengthening case behaves differently to the weakening case.

While normal mode instabilities are the dominant features in the asymptotic long time regime, the assumption of exponential dependence in time aids in making analytical progress for the shear instability problem but is at heart a simplifying assumption. Physically, perturbations may have \emph{transient growth} or \emph{response} in some choice of measure/norm that could be larger for \emph{finite} time windows. Mathematically, the exponential solutions do not form a complete set when the governing operator is \emph{non-normal}, and the initial-value problem is not completely described by the normal modes. In the geophysical fluid dynamics literature at least, this was first pointed out in the works of \cite{Farrell82, Farrell84} for the standard Eady problem. The associated theory is well-developed \cite[e.g.,][]{Trefethen-et-al93, FarrellIoannou96a, FarrellIoannou96b} and has been considered by many researchers particularly in the field of fluid dynamics and in the study of transition to turbulence; we refer the reader to the reviews of \cite{Schmid07, Kerswell18} for an overview of the theory and applications.

A motivating interest in our work is on understanding how the slope affects characteristics of transient growth in the sloped Eady problem, given the asymmetric impact it has on normal mode instabilities \cite{Mak-et-al24}, and how such transient growth would be interpreted in terms of CRWs. Transient growth has been considered in many works for the non-sloped Eady problem \cite[e.g.,][]{Farrell82, Farrell84, DaviesBishop94, DeVriesOopsteegh07a, DeVriesOopsteegh07b, DeVries-et-al09, KalashnikChkhetiani18}, and an interpretation in terms of CRWs is the growth ones gets as the CRWs start from a non-phased-locked configuration and evolve towards a phase-locked position in the case where there is normal mode instability, or the growth one gets when CRWs propagate and interact with each other without phase-locking in settings where there is no normal mode instability \cite[e.g.,][]{HeifetzMethven05, KalashnikChkhetiani18}. In terms of a dynamical system representation, the goal of transient instability is finding the initial location in phase space such that the resulting trajectory has maximum growth in some norm to be specified as it evolves in phase space under the dynamical evolution. The previous works have considered cases where the system is chosen to be symmetric at the system level (usually in the base PV gradient), where it is the asymmetry in the choice of initial condition that is considered, calculating the resulting growth and/or patterns via the initial value problem \cite[e.g.,][]{DaviesBishop94, DeVriesOopsteegh07a, DeVriesOopsteegh07b, DeVries-et-al09, KalashnikChkhetiani18}. Here our focus is investigating the role of the asymmetry in the base PV gradient, controlled in this case by the magnitude of the bottom slope, and how that affects the resulting \emph{optimal} transient growth and its characteristics. 

For our present work we will present results for an investigation of transient baroclinic instabilities for the sloped Eady problem, following the same set up of \cite{Mak-et-al24}; a similar investigation has been performed for an analogous Phillips problem \cite[cf.,][]{ChenKamenkovich13} with and without the planetary $\beta$, but the qualitative findings are exponentially identical to the sloped Eady problem, and we refer the reader to the discussion section Appendix~\ref{app:A} and the provided analysis code for related information. In Sec.~\ref{sec:eady} we state and provide an overview of the governing equations from \cite{Mak-et-al24} for the sloped Eady problem, in the forms relevant for the study of transient growth and the interpretation in terms of CRWs. In Sec.~\ref{sec:transient} we recap the on the mathematical details of studying transient growth, and provide some results relating to norm independence of instability characteristics associated with maximum instantaneous growth, a seemingly unique role of the total energy norm, and asymmetry effects introduced by the presence of the small linear slope on finite time transient growth in terms of CRW interactions. We conclude and provide outlooks in Sec.~\ref{sec:conclude}.


\section{Governing equations for the sloped Eady problem}\label{sec:eady}

The setup and relevant equations are essentially given in \cite{Mak-et-al24}, but we reproduce some of the details for internal consistency. We assume the dynamics are rotationally dominant and in the small Rossby number regime such that it is governed by the continuously stratified quasi-geostrophic equations \cite[e.g.,][]{Vallis-GFD}, with PV advection in the interior of the domain and buoyancy advection at the vertical boundaries. The physical set up is as given in the schematic in Fig.~\ref{fig:fig1}, noting that the current set up considers a symmetrized domain (as opposed to having the bottom boundary at $z=0$ as in the atmospheric setting, or upper boundary at $z=0$ for oceanic settings). We take the $f$-plane formulation with the Coriolis parameter $f_0 > 0$, and the basic state shear flow $\overline{U}(z)\boldsymbol{e}_x$ in the zonal direction with a vertical linear shear, i.e. $\overline{U}(z) = \Lambda z$; by geostrophic balance, the basic meridional buoyancy profile is $\overline{B}(y) = -f_0 \Lambda y$. There is a linear meridional bottom slope, which we assume to be of order Rossby number \cite[e.g.,][]{Isachsen11, Chen-et-al20}, and its impact manifests as a contribution in the boundary condition via modifying the bottom background PV gradient.

\begin{figure}
  \includegraphics[width=\textwidth]{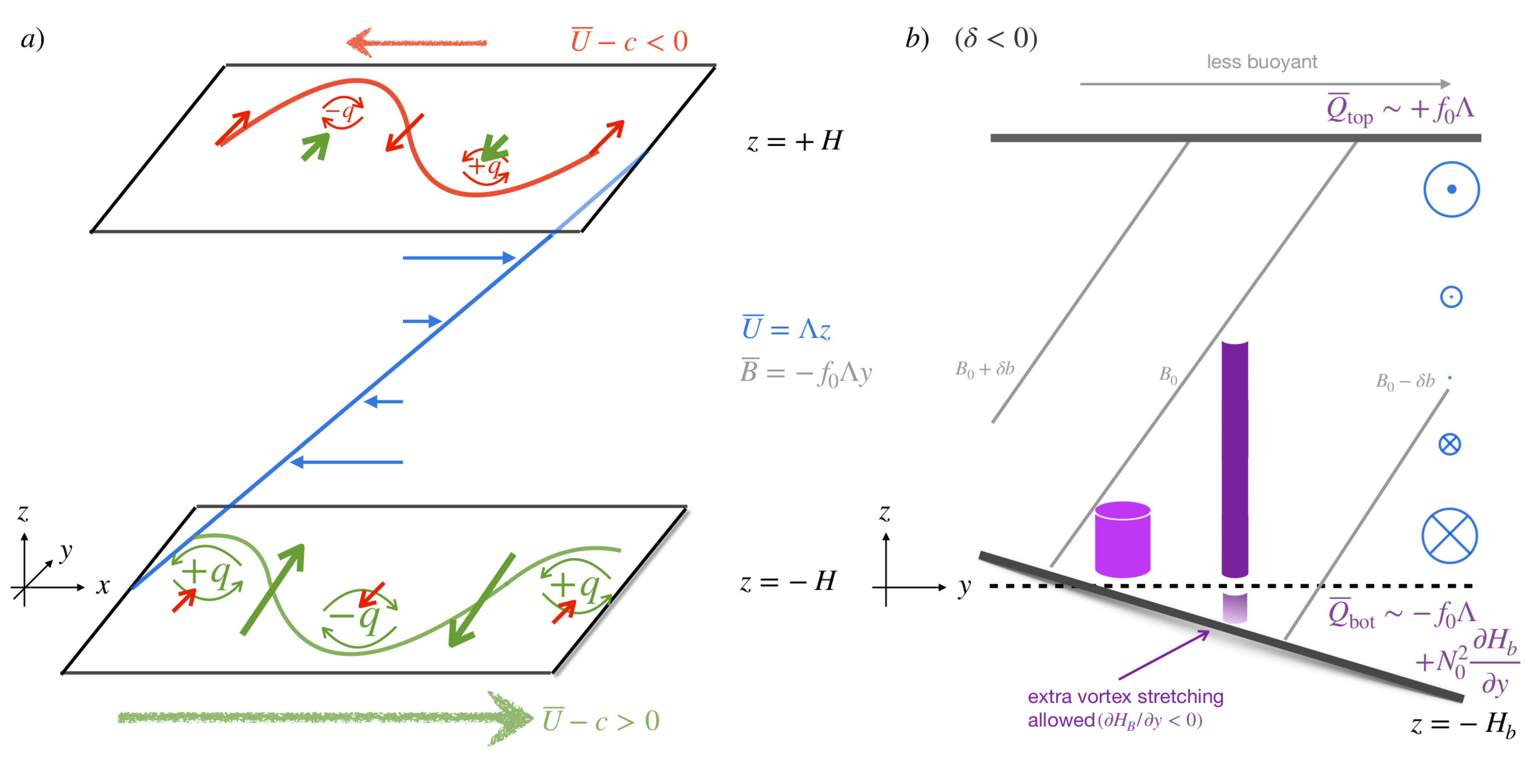}
  \caption{Physical schematic for the sloped Eady problem for the $\delta < 0$ case (where topography and $\overline{B}$ are sloping in the opposite sense) that \emph{adds} to the bottom PV gradient, reworked from Fig.~1 of \cite{Mak-et-al24}. ($a$) Predominantly $(x,z)$ cross-section view showing basic state and CRWs localized at the vertical boundaries with displacement in the meridional direction $y$, in an unstable configuration for a case where the bottom CRW is \emph{strengthened} (indicated by the relatively larger self-induced velocities with longer and thicker arrows). ($b$) Head on $(y,z)$ view, illustrating the basic linear flow $\overline{U}$ coming out and going into the page, the associated buoyancy profile via thermal wind shear relation, where $\delta < 0$ case is such that the slope \emph{reinforces} the bottom PV gradient by allowing extra stretching of a fluid column relative to the flat case. The $\delta > 0$ case would have $\overline{B}$ and slopes sloping in the same sense, leading to a \emph{weaker} bottom PV gradient and weaker bottom CRW.}
  \label{fig:fig1}
\end{figure}

Linearizing about the basic states and non-dimensionalizing by horizontal length scale $L$, vertical length scale $H$ and time scale $T = L / (\Lambda H)$, we obtain the governing equations in terms of a streamfunction $\psi$ as (Eq.~3 of \cite{Mak-et-al24})
\begin{linenomath*}
\begin{subequations}\label{eq:lin-nondim}
\begin{align}
  &\left(\frac{\partial}{\partial t} + z\frac{\partial}{\partial x}\right)\left(\nabla^2 \psi + F^2 \frac{\partial^2\psi}{\partial z^2}\right)=0, \qquad &z \in(-1, 1),\\
  &\left(\frac{\partial}{\partial t} + \frac{\partial}{\partial x}\right)\frac{\partial\psi}{\partial z} - \frac{\partial\psi}{\partial x} = 0, \qquad &z = 1,\\
  &\left(\frac{\partial}{\partial t} - \frac{\partial}{\partial x}\right)\frac{\partial\psi}{\partial z} - (1 - \delta)\frac{\partial\psi}{\partial x} = 0, \qquad &z = -1,
\end{align}
\end{subequations}
\end{linenomath*}
with $F^2 = (f_0 L / NH)^2$ the inverse of the Burger number, which is a scaling factor and we take it to be unity for simplicity. Here the interior PV is zero and the interior equation dictates the vertical structure of the solutions (which are hyperbolic sine and cosines in $z$), while the boundary conditions provide the remaining information. The key control parameter is
\begin{linenomath*}\begin{equation}
  \delta = \frac{\partial H_b}{\partial y}\left(\frac{N_0^2}{f_0 \Lambda}\right) = \left.\frac{\partial H_b}{\partial y} \right/ \frac{-\partial \overline{B} / \partial y}{\partial \overline{B} / \partial z},
\end{equation}\end{linenomath*}
which relates to the orientation between the slopes of the background buoyancy (the denominator including the minus sign) and the linear bottom topography (the nominator); when $\delta < 0$ (or retrograde configuration) the orientations are opposite, while $\delta > 0$ (or prograde configuration) is the converse. The $\delta$ parameter can be related to the topographic $\beta$ \cite[e.g.,][]{Isachsen11}, which is reasonable given the clear role of $\delta$ in adding or subtracting from the background PV gradient, as illustrated in Fig.~\ref{fig:fig1}. 

We assume periodicity in $x$ and consider solutions of the form $\psi = \tilde{\psi}(z, t)\exp[\mathrm{i}kx]g(y)$, where $\mathrm{i} = \sqrt{-1}$ and $g(y)$ is an appropriate eigenfunction of the Laplacian operator so $\partial^2 g / \partial y^2 = -l^2 g$ (e.g., combinations of $\sin ly$ and $\cos ly$ as appropriate with periodic or no-normal flow meridional boundary conditions), so that $(k,l)$ are the zonal and meridional wavenumbers. One would normally proceed by staying with streamfunction variables, taking a wave-like dependence in time, and then compute for the normal mode instabilities with exponential growth. However as argued in \cite{Mak-et-al24} one could reformulate the problem in terms of PV variables, which is equivalent and more natural for interpretation within the CRW framework, particularly for cases where we know the CRWs are localized and finite in number. To do this, we consider a linear superposition of untilted structures localized at the top and bottom boundary (subscript $T$ and $B$ respectively) $\tilde{\psi} = \tilde{\psi}_T + \tilde{\psi}_B$. By PV inversion (cf. Eq.~\ref{eq:lin-nondim}$a$), we have
\begin{linenomath*}\begin{equation}\label{eq:helmholtz}
  \tilde{q} = -\mu^2 \tilde{\psi} + \frac{\partial^2\tilde{\psi}}{\partial z^2}, \qquad \mu^2 = k^2 + l^2,
\end{equation}\end{linenomath*}
and there is no planetary $\beta$ term in this case since our formulation is on a $f$-plane. Since PV inversion is linear, formally we have $\tilde{q} = \tilde{q}_T + \tilde{q}_B$, with the inversion subject to the boundary conditions
\begin{linenomath*}\begin{equation}\label{eq:bcs}
  \left.\frac{\partial \tilde{\psi}_T}{\partial z}\right|_{z=-1} = 0, \qquad \left.\frac{\partial \tilde{\psi}_B}{\partial z}\right|_{z=+1} = 0.
\end{equation}\end{linenomath*}
Taking the $\tilde{q}_{T,B} = \hat{q}_{T,B}(t)\hat{\delta}(z\mp 1)$ so that the PV signature is localized at the top and bottom boundary $z=\pm 1$ respectively, we have
\begin{linenomath*}\begin{equation}\label{eq:psi_wave}
  \tilde{\psi}_T = -\hat{q}_T\frac{\cosh\mu(1+z)}{\mu\sinh2\mu}, \qquad \tilde{\psi}_B = -\hat{q}_B\frac{\cosh\mu(1-z)}{\mu\sinh2\mu},
\end{equation}\end{linenomath*}
either by manual construction \cite[cf.][]{DaviesBishop94}, or by noting that we are looking for the Green's function associated with the one-dimensional Helmholtz operator subject to appropriate homogeneous Neumann conditions. With the above identity and algebraic manipulations, the boundary equations in \eqref{eq:lin-nondim} can be written as the system
\begin{linenomath*}\begin{equation}\label{eq:A_equ}
  \frac{\mathrm{d}}{\mathrm{d}t}\mathbf{q} = \mbox{A}\mathbf{q}, \qquad \mathbf{q} = \begin{pmatrix}\hat{q}_T \\ \hat{q}_B\end{pmatrix}, \quad \mbox{A} = \mathrm{i}k \begin{pmatrix} -\left(1 - \cfrac{\cosh 2\mu}{\mu \sinh 2\mu}\right) & \cfrac{1}{\mu\sinh 2\mu} \\ -\cfrac{1 - \delta}{\mu \sinh 2\mu} & 1 - (1-\delta)\cfrac{\cosh 2\mu}{\mu\sinh 2\mu} \end{pmatrix}.
\end{equation}\end{linenomath*}
If we consider solutions $\hat{q}_{T,B}(t) \sim \exp[\mathrm{i}\omega t] = \exp[\mathrm{i}k(c_r + \mathrm{i}c_i)t]$ then it can be shown the resulting dispersion relation has solutions equivalent to standard formulations in terms of streamfunction variables (e.g., Eq.~8 and Fig.~2 of \cite{Mak-et-al24}). Note also that the governing equation \eqref{eq:A_equ} is $\mbox{PT}$ (parity-time) symmetric, as previously noted in \cite{Mak-et-al24}: following \cite{David-et-al22}, $\mbox{P}:(x, y) \mapsto (-x, -y)$ leads to $(k, l) \mapsto (-k, -l)$ while $\mbox{T}:(t, q) \mapsto (-t, -q)$. Under $\mbox{PT}$, $\mbox{A} \mapsto -\mbox{A}$ and there is an extra minus sign associated with $\mathrm{d}/\mathrm{d}t$, leaving the system invariant. If we consider modal solutions, then $\mathrm{d}/\mathrm{d}t \to \mathrm{i}\omega$, and then the resulting operator (without the factor of $\mathrm{i}$ that cancel out with modal solutions) can be shown to be $\mbox{PT}$ symmetric, i.e. $(\mbox{PT}) \mbox{A}^* (\mbox{PT})^{-1} = (-\mbox{I}) \mbox{A}^* (-\mbox{A}) = \mbox{A}^* = \mbox{A}$, where $(\cdot)^*$ is taking a component-wise complex conjugate (and is trivial in this case since the resulting $\mbox{A}$ only has real entries).

For formulating the problem in terms of CRW quantities, we take $\hat{q}_T = T\mathrm{e}^{\mathrm{i}\epsilon_{T}}$ and similarly for $\hat{q}_B$, where the amplitudes $T$, $B$ and the phases $\epsilon_{T,B}$ are real functions of time. Taking the real and imaginary parts accordingly leads to (correcting typographic sign errors in Eq.~15 of \cite{Mak-et-al24})
\begin{linenomath*}\begin{subequations}\label{eq:amp_phase}
\begin{align}
  \frac{1}{T}\frac{\mathrm{d} T}{\mathrm{d} t} &= +\frac{k}{\mu \sinh 2\mu}\frac{B}{T}\sin\Delta\epsilon,\\
  \frac{1}{B}\frac{\mathrm{d} B}{\mathrm{d} t} &= +\frac{k(1-\delta)}{\mu \sinh 2\mu}\frac{T}{B}\sin\Delta\epsilon,\\
  -\frac{1}{k}\frac{\mathrm{d} \epsilon_T}{\mathrm{d} t} &= +\left[1 - \frac{1}{\mu\sinh2\mu}\left(\cosh2\mu + \frac{B}{T}\cos\Delta\epsilon\right)\right],\\
  -\frac{1}{k}\frac{\mathrm{d} \epsilon_B}{\mathrm{d} t} &= -\left[1 - \frac{(1-\delta)}{\mu\sinh2\mu}\left(\cosh2\mu + \frac{T}{B}\cos\Delta\epsilon\right)\right].
\end{align}
\end{subequations}\end{linenomath*}
Defining the relative amplitude ratio $\tan\Gamma = T/B$ and relative phase-shift $\Delta\epsilon = \epsilon_T - \epsilon_B$ ($\Delta\epsilon > 0$ means the top wave has a PV signature that \emph{lags} the bottom wave PV; cf. Fig.~\ref{fig:fig1}$a$), it can be shown that
\begin{linenomath*}\begin{subequations}\label{eq:amp_phase_ratios}
\begin{align}
    \frac{\mathrm{d} \Gamma}{\mathrm{d} t} &= \frac{k}{\mu \sinh2\mu} \sin\Delta \epsilon (\cos 2\Gamma + \delta \sin^2\Gamma),\\
    \frac{\mathrm{d} \Delta\epsilon}{\mathrm{d} t} &= \frac{2k}{\mu \sinh2\mu} \left[\left(1-\frac{\delta}{2}\right)\cosh2\mu - \mu\sinh2\mu + \left(\frac{1}{\sin 2\Gamma} - \frac{\delta}{2}\tan\Gamma\right)\cos\Delta\epsilon \right].
\end{align}
\end{subequations}\end{linenomath*}

As was noted in \cite{Mak-et-al24}, the dynamical system \eqref{eq:amp_phase_ratios} has centers corresponding to cases where there are no exponentially growing normal modes with freely propagating Rossby waves that interact but do not phase-lock. The system can bifurcates into a case with one stable and one unstable fixed point when the marginal stability threshold is crossed; the \emph{stable} fixed point corresponds to the exponentially \emph{growing} normal mode, i.e. the CRW pairs gets \emph{attracted} into a phase-locked position that leads to mutual amplification. The bifurcation behavior is associated with that of $\mbox{PT}$ symmetry \cite[e.g.,][]{David-et-al22, MengGuo23}: the solutions initially have unbroken $\mbox{PT}$ symmetry with two real eigenvalues associated with waves with opposite Krein signatures (they propagate in different directions, cf. also positive/negative energy modes, e.g., \cite{Cairns79}), collide at exceptional points (the locations of marginal stability) and lead to solutions with spontaneously breaking of $\mbox{PT}$ symmetry with a complex conjugate pair of eigenvalues.

Note that two constants of motion can be identified from the above equations as (cf. those of \cite{DaviesBishop94} but up to overall multiplicative factors)
\begin{linenomath*}
\begin{align}\label{eq:const_of_motion}
  T^2 - \frac{B^2}{1-\delta} = \textnormal{constant}, \qquad     \frac{(T^2 + B^2)\cosh 2\mu + 2 TB \cos\Delta\epsilon}{\mu\sinh 2\mu} - \left(T^2 + \frac{B^2}{1-\delta}\right) = \textnormal{constant}.
\end{align}
\end{linenomath*}
These turn out to correspond precisely to that of pseudomomentum and pseudoenergy conservation. Further, one can define a Hamiltonian proportional to the pseudoenergy that has an action-angle formulation, where the action is proportional to the pseudomomentum and the angle relates to the phase; the interested reader is referred to Appendix~\ref{app:B}. If we assume $B \neq 0$, then dividing the two constants of motion by $B^2$ and supposing we are in the case with normal mode exponential growth and $T^2 / B^2 \to \textnormal{constant} < \infty$ leads to (using some trigonometric identities relating to $\tan\Gamma$ carefully, e.g., correcting a factor of 2 in the expression for $\sin 2\Gamma$ in \cite{Mak-et-al24})
\begin{linenomath*}\label{eq:NM_solu}
\begin{align}
  \frac{T}{B} = \tan\Gamma = \frac{1}{\sqrt{1-\delta}}, \qquad \cos\Delta\epsilon = \frac{2\mu\sinh 2\mu - (2-\delta)\cosh 2\mu}{2\sqrt{1-\delta}},
\end{align}
\end{linenomath*}
which are the analytical expressions relating to the graphs and marginal stability boundaries (where $|\cos\Delta\epsilon| = 1$) in Fig.~5 of \cite{Mak-et-al24}. These constants of motion can in principle be used to construct certain analytical solutions \cite[e.g.,][]{DaviesBishop94, KalashnikChkhetiani18}, although this was not pursued here.


\section{Characteristics of transient growth}\label{sec:transient}

The study of transient growth and the associated machinery is well-established and has been documented accordingly \cite[e.g.,][]{Trefethen-et-al93, FarrellIoannou96a, Schmid07, Kerswell18}. Here we only recap the portions we need for the present study of transient growth in the linear regime. The system \eqref{eq:A_equ} formally has a solution $\mathbf{q}(t) = \exp[\mbox{A}t]\mathbf{q}(0) = \mbox{P}(t)\mathbf{q}(0)$, where $\mbox{P}(t)$ is termed the \emph{propagator}. For studying transient growth the goal is to find, for some chosen time window $[0, t]$, the control state vector $\mathbf{q}(0)$ that maximizes the response
\begin{linenomath*}
\begin{align}\label{eq:gain}
  G(t) = \frac{\|\mathbf{q}(t)\|_{\mbox{M}}^2}{\|\mathbf{q}(0)\|_{\mbox{M}}^2} = \frac{\|\mbox{P}(t)\mathbf{q}(0)\|_{\mbox{M}}^2}{\|\mathbf{q}(0)\|_{\mbox{M}}^2}, \qquad \|\mathbf{a}\|_{\mbox{M}}^2 = \mathbf{a}^\dagger \mbox{M} \mathbf{a} = (\mbox{M}^{1/2}\mathbf{a})^\dagger (\mbox{M}^{1/2}\mathbf{a}).
\end{align}
\end{linenomath*}
Here $(\cdot)^\dagger$ denotes a Hermitian conjugate (component-wise complex conjugate and a transpose for matrices), $\mbox{M}$ is a positive-definite Hermitian matrix, so that $\mbox{M}^{1/2}$ is well-defined and is also Hermitian (invoked above as $(\mbox{M}^{1/2})^\dagger = \mbox{M}^{1/2}$), and $\|\cdot\|_{\mbox{M}}$ is the `standard' norm in the transformed variable $\tilde{\mathbf{q}} = \mbox{M}^{1/2}\mathbf{q}$. The resulting quantities from the optimization depend ultimately on the choice of optimization window $[0, t]$ and norm implied by $\mbox{M}$.

In the linear case it can be shown that \cite[e.g.,][]{FarrellIoannou96a, Schmid07} the maximizer can be obtained by performing a singular value decomposition of $\tilde{\mbox{P}}(t) = \mbox{M}^{1/2} \mbox{P}(t) \mbox{M}^{-1/2} = \mbox{U}\Sigma\mbox{V}^\dagger$, where $\mbox{U}$ and $\mbox{V}$ are unitary (i.e. $\mbox{U}^\dagger\mbox{U} = \mbox{I}$), and $\Sigma$ consists of the singular values of the propagator. An object of focus here is the right singular vector $\mathbf{v}_1 = \mbox{M}^{1/2}\mathbf{q}_{\rm opt}(0)$ taken from $\mbox{V}$ associated with the largest singular value $\sigma_1$; the left singular vector associated with $\mbox{U}$ provides the end state proportional to $\mathbf{q}_{\rm opt}(t)$ where evolution is governed here by $\mbox{A}$. In the $t \to \infty$ limit it is the normal model solutions woth exponential growth (where they exist) that dominate, and this we compute through finding the eigenvalue and eigenvectors of $\mbox{A}$ (or $\tilde{\mbox{A}} = \mbox{M}^{1/2} \mbox{A} \mbox{M}^{-1/2}$, which will have the same eigenvalues since it is a self-similar transform). Note this is not a statement about the normal mode eigenfunctions and corresponding optimal finite time initial conditions coinciding in the $t \to \infty$ limit, as will be discussed briefly in Sec.~\ref{sec:finite_time}. In all the above cases, we can also compute the associated CRW quantities associated with the optimal initial perturbation $\mathbf{q}_{\rm opt}(0) = (q_T, q_B)$ defined by
\begin{linenomath*}
\begin{align}\label{eq:opt_deps_gamma}
  \Delta \epsilon = \mbox{arg}\left(\frac{q_T}{q_B}\right) = \epsilon_T - \epsilon_B, \qquad \Gamma = \arctan \left(\frac{|q_T|}{|q_B|}\right) = \frac{T}{B}.
\end{align}
\end{linenomath*}

As concrete examples of choice of norms, the trivial case of no transformation corresponds to a maximization under the (potential) enstrophy norm, i.e.
\begin{linenomath*}
\begin{align}\label{eq:ens_norm}
  \mbox{M} = \mbox{I} \qquad \Leftrightarrow \qquad Z = \left\langle q, q\right\rangle = \frac{1}{2} \int |q|^2 \; \mathrm{d}V,
\end{align}
\end{linenomath*}
appropriately projected onto the CRW variables, because in our case the equations of motion described by $\mbox{A}$ is in terms of potential vorticity variables, and we have introduced the inner produce notation $\langle a, b \rangle = (1/2) \int a^* b\; \mathrm{d}V$ notation. On the other hand, it can be shown that (cf. the procedure in \cite{HeifetzMethven05})
\begin{linenomath*}
\begin{align}\label{eq:ene_norm}
  \mbox{M} = \frac{1}{\mu\sinh 2\mu}
    \begin{pmatrix}
      \cosh 2\mu & 1 \\
      1 & \cosh 2\mu 
    \end{pmatrix} 
    \qquad \Leftrightarrow \qquad E = \left\langle \nabla\psi + \frac{\partial\psi}{\partial z}, \nabla\psi + \frac{\partial\psi}{\partial z} \right\rangle  = -\left\langle \psi, q \right\rangle,
\end{align}
\end{linenomath*}
which corresponds to the total (kinetic and potential) energy norm after performing an integration by parts and projecting onto the CRW variables; the inner product defined in terms of $\psi q$ leads to easier manipulations, as the full form results in some integrals in depth for the hyperbolic sine and cosine terms from \eqref{eq:psi_wave}, although the expressions returned are identical in both cases. There are other choices of norms one could consider (e.g., kinetic energy, potential energy, pressure, or more arbitrary ones in principle), although here we will only present results for enstrophy and (total) energy, and only mention the results for the other cases (the kinetic, available potential energy and boundary pressure norms, as was considered in \cite{KalashnikChkhetiani18}) as they come up in the narrative, since the two presented here seem to display the representative behavior in the set of norms we have considered (results for kinetic and boundary pressure norm are qualitatively similar to energy norm, while the available potential energy norm is qualitatively similar to enstrophy norm). The interested reader can consult Appendix~\ref{app:C} or the analysis code provided for results associated with the other norms stated above.


\subsection{Instantaneous growth}\label{sec:inst_time}

One limit where some simplification could be made is the infinitesimal time limit $t \ll 1$. Working in the transformed variable $\tilde{\mathbf{q}}$, note that \eqref{eq:gain} is
\begin{linenomath*}
\begin{align*}
  \|\mbox{P}(t)\mbox{q}(0)\|_{\mbox{M}}^2 = \mathbf{q}^\dagger(0) \mbox{P}^\dagger(t) \mbox{M} \mbox{P}(t) \mathbf{q}(0) = \tilde{\mathbf{q}}^\dagger(0) \tilde{\mbox{P}}^\dagger(t) \tilde{\mbox{P}}(t) \tilde{\mathbf{q}}(0),
\end{align*}
\end{linenomath*}
where $\tilde{\mbox{P}} = \mbox{M}^{1/2} \exp[\mbox{A}t] \mbox{M}^{-1/2}$, and we have again made use of the fact that $\mbox{M}^{\pm 1/2}$ are Hermitian (since $\mbox{M}$ is Hermitian and positive-definite). We have
\begin{linenomath*}
\begin{align}
  \tilde{\mbox{P}}^\dagger \tilde{\mbox{P}} = \mbox{M}^{-1/2}\ (\exp[\mbox{A}^\dagger t]\ \mbox{M}\ \exp[\mbox{A} t])\ \mbox{M}^{-1/2},
\end{align}
\end{linenomath*}
and a Taylor expansion of the matrix exponential for $t\ll 1$ gives
\begin{linenomath*}
\begin{align}\label{eq:A_tilde_formulation}
  \tilde{\mbox{P}}^\dagger \tilde{\mbox{P}} \approx \mbox{M}^{-1/2}\ [\mbox{M} + (\mbox{MA} + \mbox{A}^\dagger\mbox{M})t + O(t^2)]\ \mbox{M}^{-1/2} = \mbox{I} + [\tilde{\mbox{A}} + \tilde{\mbox{A}}^\dagger] t + O(t^2)
\end{align}
\end{linenomath*}
where $\tilde{\mbox{A}} = \mbox{M}^{1/2}\mbox{A}\mbox{M}^{-1/2}$ is the governing operator under a self-similarity transform associated with the choice of norm defined by $\mbox{M}$. For $\tilde{\mbox{S}} = [\tilde{\mbox{A}} + \tilde{\mbox{A}}^\dagger]/2$, the leading order behavior can be obtained from an eigenvalue analysis of $\tilde{\mbox{S}}\tilde{\mathbf{q}}(0) = \sigma\tilde{\mathbf{q}}(0)$, with the largest eigenvalue associated with the largest instantaneous growth in the chosen norm, and the eigenvector be the relevant optimal solution \cite[e.g.,][]{FarrellIoannou96a}. Alternatively, we could consider staying in the $\mathbf{q}$ variables, which would lead to
\begin{linenomath*}
\begin{align}\label{eq:M_formulation}
  \mbox{P}^\dagger \mbox{M} \mbox{P} \approx \mbox{M} + [\mbox{M} \mbox{A} + \mbox{A}^\dagger \mbox{M}]t + O(t^2),
\end{align}
\end{linenomath*}
and in this case we would perform a \emph{generalized} eigenvalue analysis $\mbox{S}\mathbf{q}(0) = \sigma\mbox{M}\mathbf{q}(0)$ where $\mbox{S} = [\mbox{M} \mbox{A} + \mbox{A}^\dagger \mbox{M}]/2$. We think the formulation in terms of $\tilde{\mbox{S}}$ is more intuitive (choose a norm, transform the variables and governing operator, and find eigenvalues of the transformed operator added to its Hermitian conjugate), but the formulation in terms of $\mbox{S}$ has the convenience that we do not need to deal with the square root of a matrix. The numerical results for this work employs the formulation with $\tilde{\mbox{S}}$, although one of the analytical results is more conveniently derived with the formulation in $\mbox{S}$.

For comparison purposes with previous work \cite[e.g.,][]{Mak-et-al24} we focus on the meridional wavenumber $l=0$ case (and so $\mu = k$ in the above), and we comment on the $l > 0$ cases in the conclusion section. In Fig.~\ref{fig:fig2} we show for each of the $\delta$ positive, zero and negative case the instantaneous growth rate associated with the largest eigenvalue value of $\tilde{\mbox{S}}$ in the enstrophy and (total) energy norm (red and green line respectively); we also show the exponentially growing normal mode instability (as $\sigma = \omega_i = kc_i$; black dashed line) for reference. Consistent with previous works albeit for different systems \cite[e.g.,][]{HeifetzMethven05, KalashnikChkhetiani18}, the instantaneous growths surpass that of the normal mode grwoth across the bandwidth (which is in fact the case for all norms considered in this work), and is generally non-zero even when there is no normal mode instability with exponential growth, although of course the growth rate is only for an infinitesimal time horizon. Like those previous works, the maximum instantaneous growth rates for enstrophy and energy norms occur at the small $k$ values, with the enstrophy one approaching infinity with $k\searrow 0$. One not so obvious feature is that the instantaneous growth in the energy norm seems to be $\delta$-independent, which is not the case for the enstrophy norm displayed here (the red curve in Fig.~\ref{fig:fig2} has a shift to the `right' as we move down the panels), nor is it true for any of the other norms considered; we will come back to this observation shortly.

\begin{figure}
  \centering
  \includegraphics[width=0.5\linewidth]{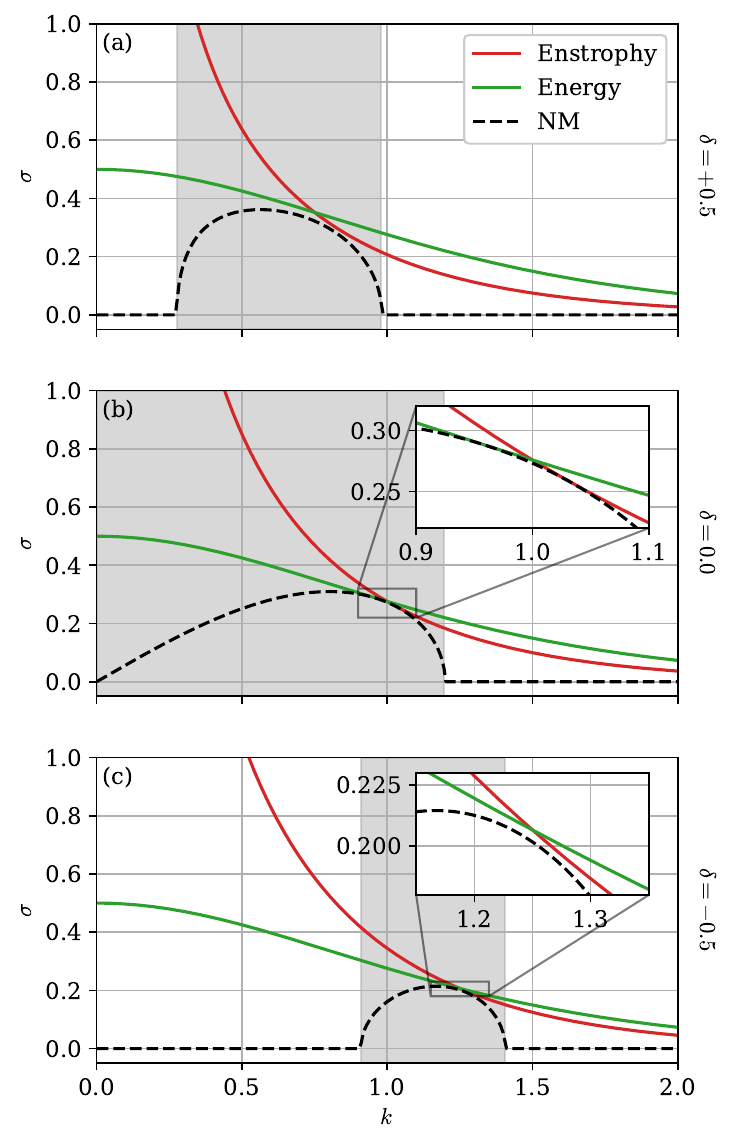}
  \caption{Comparison between the instantaneous growth rates in the enstrophy and energy norms (red and green respectively) and the growth rates of normal modes (black dashed) as a function of zonal wavenumber $k$ (taking meridional wavenumber $l=0$), for three choices of slope parameter $\delta$. The shaded regions indicate the region where there is normal mode instability.}
  \label{fig:fig2}
\end{figure}

In the $\delta = 0$ case (i.e. no slope, standard Eady problem) it was determined that the instantaneous growth curves do intersect and coincide with the normal mode growth at exactly one $k \in (0, \infty)$ for the present case where $l=0$, determined numerically to be $k=1.031$ for the enstrophy norm and $k=0.955$ for the energy norm. This observation relates to the normality of the operator $\mbox{A}$ and the appropriate $\tilde{\mbox{A}}$ at those specific values of $k$. When the matrix of the operator is normal (i.e. $\mbox{A}\mbox{A}^\dagger = \mbox{A}^\dagger\mbox{A}$) the magnitudes of the singular values and eigenvalues coincide (and the singular vectors and eigenvectors differ by a phase). On the other hand, it would appear that when $\delta \neq 0$ there is no such agreement between the computed growth rates at least for the cases examined, suggesting the corresponding operator is generically non-normal. The exact value of $k$ where the $\mbox{A}$ (or $\tilde{\mbox{A}}$) operator is normal or not can in principle be computed by hand via examining the components of $\mbox{A}\mbox{A}^\dagger - \mbox{A}^\dagger\mbox{A}$, possibly finding that for $\delta \neq 0$ there is no $k$ value where all components vanish identically. In our case we opted instead to computed a matrix norm numerically, and found they were numerically non-zero (we chose the $L_{1,1}$ norm, i.e. sum of the absolute values of the matrix entries).

In Fig.~\ref{fig:fig3} we show the CRW quantities associated with the instantaneous growth in the enstrophy and energy norms (left and right columns) for the same three choices of slope parameter $\delta$, with the associated quantities for the normal mode for comparison. We first review the results for the normal mode cases as reported in \cite{Mak-et-al24}, since we will revisit this later for instability characteristics for finite time transient growth in Sec.~\ref{sec:finite_time}. The phase-shift $\Delta\epsilon$ is $\pi$ (out of phase) when $k$ is small and there is no exponential growth, and is $0$ (in phase) when $k$ is large and also no exponential growth, with a transition region in between. Note that the optimal exponential growth occurs at $\Delta\epsilon \gtrsim \pi/2$ rather than exactly at $\pi/2$; physically, the mutual hindering effect is desirable for phase-locking of the CRW pairs. On the other hand, the normal mode CRW amplitude ratios are constant throughout the region with exponential growth, with $\Gamma$ being either side of $\pi/4$ depending on the choice of $\delta$ (with $\Gamma = \pi/4$ at $\delta = 0$); recalling that $\Gamma = \arctan(T/B)$, the bottom CRW is \emph{stronger} for $\delta < 0$ while it is \emph{weaker} for $\delta > 0$, consistent with the choice of $\delta$ strengthening or weakening the bottom PV gradient respectively.

\begin{figure}
  \centering
  \includegraphics[width=0.9\linewidth]{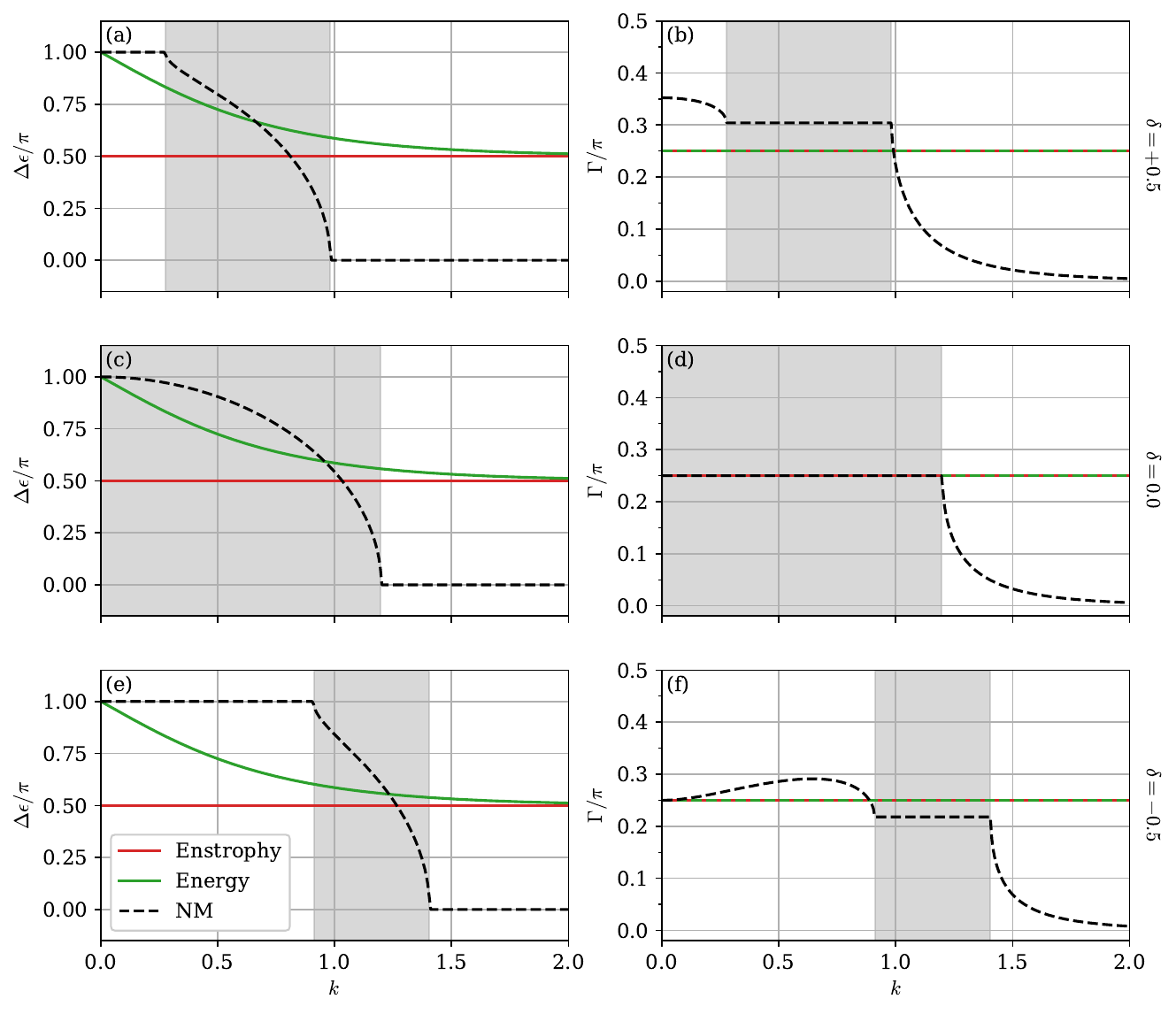}
  \caption{CRW quantities associated with the instantaneous growths in enstrophy and energy norm (red and green respectively) and for the normal mode (black dashed) as a function of zonal wavenumber $k$ (taking meridional wavenumber $l=0$), for three choices of slope parameter $\delta$. ($a,c,e$) The scaled phase-shift $\Delta\epsilon / \pi$. ($b,d,f$) Scaled amplitude ratio $\Gamma / \pi$. The shaded regions indicate the region where there is normal mode instability.}
  \label{fig:fig3}
\end{figure}

With the observations and its dependence on $\delta$ for the normal mode in mind, it is perhaps interesting to see that, for instantaneous growth, $\Delta\epsilon$ and $\Gamma$ seem to both be $\delta$-independent! The amplitudes of the CRWs of instantaneously growth appears to always be equal ($\Gamma = \pi/4$) regardless of choice of norm and $\delta$. On the other hand, the phase-shift $\Delta\epsilon$ is $\pi/2$ for the enstrophy norm (consistent with the works of \cite{HeifetzMethven05, KalashnikChkhetiani18} for the Rayleigh sheet and non-sloped Eady system), but is something between $\pi$ and $\pi/2$ for the energy norm dependent on $k$ not independent of $\delta$. This may suggest the $k$-dependence is a manifestation of the transformation into different norms by the choice of $\mbox{M}$ in \eqref{eq:ene_norm}, but the $\delta$-independence may be something intrinsic to the system governed by $\mbox{A}$ or its transformations. For all norms considered, where $\Delta\epsilon$ is not equal to $\pi/2$, in the large $k$ limit they do tend to $\pi/2$ from above (the available potential energy case being the exception, which approaches $\pi/2$ from below).

In fact, we have the following two technical results:
\begin{prop}\label{prop:1}
  For the problem governed by $\mbox{\textnormal{A}}$ given in \eqref{eq:A_equ}, $\Delta\epsilon$ and $\Gamma$ given in \eqref{eq:opt_deps_gamma} associated with the maximum instantaneous growth in some norm defined by $\mbox{\textnormal{A}}$ is independent of the slope parameter $\delta$ if $\mbox{\textnormal{M}}$ is $\delta$-independent, real and symmetric, as well as having equal entries along the main diagonal.
\end{prop}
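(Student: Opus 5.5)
The plan is to use the generalized eigenvalue formulation \eqref{eq:M_formulation}, $\mbox{S}\mathbf{q}(0) = \sigma\mbox{M}\mathbf{q}(0)$ with $\mbox{S} = [\mbox{M}\mbox{A} + \mbox{A}^\dagger\mbox{M}]/2$. This formulation avoids $\mbox{M}^{1/2}$, and its eigenvectors are directly the $\mathbf{q}_{\rm opt}(0)$ entering \eqref{eq:opt_deps_gamma}.

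\emph{Step 1: reduce $\mbox{S}$ to a scalar times a fixed matrix.} From \eqref{eq:A_equ}, $\mbox{A} = \mathrm{i}k\mbox{R}$ with $\mbox{R}$ real. Hence $\mbox{A}^\dagger = -\mathrm{i}k\mbox{R}^{T}$, and since $\mbox{M}$ is real symmetric,
\begin{linenomath*}
\begin{align*}
  \mbox{S} = \frac{\mathrm{i}k}{2}\left(\mbox{M}\mbox{R} - \mbox{R}^T\mbox{M}\right).
\end{align*}
\end{linenomath*}
The bracket is real and antisymmetric. As a $2\times 2$ matrix it must equal $s\,\mbox{J}$, where $\mbox{J} = \begin{pmatrix}0&1\\-1&0\end{pmatrix}$ and $s$ is a real scalar. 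Writing $\mbox{M} = \begin{pmatrix}a&b\\b&a\end{pmatrix}$, I expect a two-line calculation to give
\begin{linenomath*}
\begin{align*}
  s = a(R_{12}-R_{21}) + b(R_{22}-R_{11}) = \frac{(2-\delta)(a - b\cosh 2\mu)}{\mu\sinh 2\mu} + 2b .
\end{align*}
\end{linenomath*}
All $\delta$-dependence of the problem is therefore carried by the single scalar $s$, while the matrix structure $\mathrm{i}\mbox{J}$ is fixed.

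\emph{Step 2: solve the generalized eigenproblem.} The problem becomes $(ks/2)\,\mathrm{i}\mbox{J}\mathbf{q} = \sigma\mbox{M}\mathbf{q}$, which is equivalent to finding the eigenvectors of $\mbox{M}^{-1}\mathrm{i}\mbox{J}$.
\begin{itemize}
\item Since $\mathrm{i}\mbox{J}$ is Hermitian and $\mbox{M}$ is positive definite, the eigenvalues are real and form a pair $\pm\lambda$. This follows because the trace of $\mbox{M}^{-1}\mathrm{i}\mbox{J}$ vanishes when $\mbox{M}^{-1}$ has equal diagonal entries.
\item Here I use the hypotheses that $\mbox{M}$ is $\delta$-independent and has equal diagonal entries. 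Then $\mbox{M}^{-1} = \begin{pmatrix}c&d\\d&c\end{pmatrix}$ is also $\delta$-independent.
\item The eigenvectors of $\mathrm{i}\begin{pmatrix}-d&c\\-c&d\end{pmatrix}$ can be written explicitly. Both have $|q_T| = |q_B|$, so $\Gamma = \pi/4$.
\item Their relative phases are $\pm\Delta\epsilon_0$, with $\Delta\epsilon_0$ depending only on $c$ and $d$. The two eigenvectors are complex conjugates of each other.
\end{itemize}
Consequently, the eigenvector attached to the largest $\sigma$ is independent of $\delta$ as long as the sign of $ks$ does not change. (I take $k>0$.)

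\emph{Step 3: control the sign of $s$, which I expect to be the main obstacle.} If $s$ changes sign as $\delta$ varies, the optimal vector jumps to its conjugate, $\Delta\epsilon \mapsto -\Delta\epsilon$. At $s=0$ the problem is degenerate. The statement therefore has to be read as holding on the physically relevant range of $\delta$ (e.g. $\delta < 1$ or $\delta < 2$), where I would show $s>0$. Two sanity checks are reassuring:
\begin{itemize}
\item For the enstrophy norm ($a=1$, $b=0$), $s = (2-\delta)/(\mu\sinh 2\mu) > 0$ for $\delta<2$.
\item For the energy norm \eqref{eq:ene_norm}, $a = b\cosh 2\mu$. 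The $\delta$-term then cancels identically, leaving $s = 2/(\mu\sinh 2\mu)$. This also explains the fully $\delta$-independent growth rate observed in Fig.~\ref{fig:fig2}.
\end{itemize}
For a general admissible $\mbox{M}$, I would first try to prove positivity of $s$ by combining $a > |b|$ (positive definiteness) with $\cosh 2\mu \le \mu\sinh 2\mu + \dots$ type bounds. If this fails, I would restrict the proposition's claim to the $\delta$ range on which $s$ keeps a fixed sign.
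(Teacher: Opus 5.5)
\paragraph{Comparison with the paper's route.} Your argument is correct and rests on the same core observation as the paper's proof: the symmetrized operator is a $\delta$-dependent scalar times the fixed symplectic matrix, so $\delta$ drops out of the eigenvectors. What differs is how you get there.
\begin{itemize}
\item The paper works mainly with $\tilde{\mbox{S}}$ and $\mbox{M}^{1/2}$, obtains the $\alpha_{\mbox{M}}\mbox{J}$ form by brute force, and only remarks at the end that the $\mbox{S}$ formulation also works.
\item You take the $\mbox{S}$ route directly. Your observation that $\mbox{M}\mbox{R}-\mbox{R}^{T}\mbox{M}$ is automatically real antisymmetric replaces the brute-force computation and avoids the matrix square root.
\item It also shows more than the paper states. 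The $s\,\mbox{J}$ structure holds for \emph{any} real symmetric $\mbox{M}$, and so does the vanishing trace of $\mbox{M}^{-1}\mathrm{i}\mbox{J}$, which you attributed to equal diagonal entries. So equal diagonal entries are needed only for $\Gamma=\pi/4$, not for $\delta$-independence. Unequal entries give $\Gamma=\arctan\sqrt{M_{22}/M_{11}}$, which is still $\delta$-independent.
\item It makes the phase explicit, $\Delta\epsilon_0=\arccos(-b/a)$. For the energy norm this is $\arccos(-1/\cosh 2\mu)\in(\pi/2,\pi)$, as in Fig.~\ref{fig:fig3}.
\end{itemize}

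\paragraph{Step 3.} Your caveat is a genuine refinement. The paper's proof writes $\Delta\epsilon=\pm\pi/2$ without saying which sign the largest eigenvalue selects; you correctly identify it as $\mathrm{sgn}(ks)$. However, drop your first plan of proving $s>0$ for general admissible $\mbox{M}$, because it cannot work.

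The quantity $s$ is affine in $\delta$ with slope $-(a-b\cosh 2\mu)/(\mu\sinh 2\mu)$. It therefore vanishes at some finite $\delta^*$ for every norm except multiples of the total energy norm ($a=b\cosh 2\mu$). Moreover, $\delta^*$ can lie in $\delta<1$ even for the paper's own norms. For the kinetic energy norm of Appendix~\ref{app:C} one finds $a-b\cosh 2\mu=-1$, hence
\[
s=\frac{\delta-1}{\mu\sinh 2\mu}+\frac{2\cosh 2\mu}{\sinh^2 2\mu},
\]
which changes sign at $\delta^*=1-2\mu\coth 2\mu<0$. For example, $\delta^*\approx-0.31$ at $\mu=1/2$, so at $k=1/2$ the kinetic-energy optimal $\Delta\epsilon$ flips sign between $\delta=0$ and $\delta=-0.5$.

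Your fallback is therefore the correct form of the result: $\Gamma=\pi/4$ and $\Delta\epsilon=\mathrm{sgn}(ks)\,\Delta\epsilon_0$, with $\Delta\epsilon_0$ depending only on $\mbox{M}$. In other words, $\Delta\epsilon$ is $\delta$-independent on each $\delta$-interval where $s$ keeps its sign, and independent only up to $\Delta\epsilon\mapsto-\Delta\epsilon$ in general.
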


\begin{proof}
  Consider first the trivial case with $\mbox{M} = \mbox{I}$, i.e. the (potential) enstrophy norm. Observe that
  \begin{linenomath*}
  \begin{align*}
    \mbox{A} + \mbox{A}^\dagger = \alpha(k,l,\delta) 
  \begin{pmatrix}
  0 & 1 \\
  -1 & 0
  \end{pmatrix} = \alpha(k,l,\delta) \mbox{J}, \quad
  \alpha(k,l,\delta) = \mathrm{i} k \frac{2-\delta}{\mu \sinh(2\mu)},
  \end{align*}
  \end{linenomath*}
  where $\mbox{J}$ is the canonical $2\times 2$ symplectic matrix. The eigenvalue problem is then $\alpha(k, l, \delta) \mbox{J} \mathbf{q} = \sigma \mathbf{q}$, but because of the symplectic structure, the non-degenerate case with $\sigma > 0$ has the linear system reducing down to $(q_T / q_B)^2 = -1$, and there is no $\delta$-dependence remaining since it gets divided out for $\alpha \neq 0$. For this simple form of $\mbox{M}$ we have $\Delta\epsilon = \mbox{arg}(\pm \mathrm{i}) = \pm \pi/2$ and $\Gamma = \arctan|\pm \mathrm{i}| = \pi / 4$.
  
  For general $\mbox{M}$, note that $\mbox{M}^{\pm 1/2}$ is well-defined and Hermitian, since $\mbox{M}$ is assumed to be positive-definite Hermitian matrix. A general form of a $2\times 2$ Hermitian matrix has the form
  \begin{linenomath*}
  \begin{align*}
    \mbox{M}^{1/2} = 
      \begin{pmatrix}
        a & c \\ c^* & b
      \end{pmatrix}
  \end{align*}
  \end{linenomath*}
  where $a,b$ are real but $c$ is complex, and in general these can be functions of $k$, $l$ and $\delta$ also. However, it can be shown (e.g., brute force computation) that
  \begin{linenomath*}
  \begin{align*}
    \tilde{\mbox{A}} + \tilde{\mbox{A}}^\dagger = \mbox{M}^{1/2}\mbox{A}\mbox{M}^{-1/2} + \mbox{M}^{-1/2}\mbox{A}^\dagger\mbox{M}^{1/2} = \alpha_{\mbox{M}}(k, l, \delta) \mbox{J}
  \end{align*}
  \end{linenomath*}
  only if $a = b$ and $c \in \mathbb{R}$, for some other function $\alpha_{\mbox{M}}$ that depends on $\mbox{M}$. The associated linear system for $\alpha_{\mbox{M}} \neq 0$ reduces to $(\tilde{q}_T / \tilde{q}_B)^2 = -1$. Reverting to the CRW variables by $\mathbf{q} = \mbox{M}^{-1/2}\tilde{\mathbf{q}}$ and assuming that $\mbox{M}^{-1/2}$ is $\delta$-independent, the resulting $\Delta\epsilon$ would still be $\delta$-independent, although it may have other dependence arising from $\mbox{M}$ (e.g., wavenumber dependence, as in the energy norm considered in Eq.~\ref{eq:ene_norm}). Since $\tilde{q}_T = \pm \mathrm{i}\tilde{q}_B$, we have (taking $a=b$ and $c\in\mathbb{R}$)
  \begin{linenomath*}
  \begin{align*}
    \frac{|q_T|}{|q_B|} = \frac{T}{B} = \frac{|-c \pm\mathrm{i}a|}{|a \mp\mathrm{i}c|} = 1,
  \end{align*}
  \end{linenomath*}
  and so we still have $\Gamma = \pi / 4$ in the CRW variables for this system. 
  
  The same argument goes through if we instead consider looking at $2\mbox{S} = \mbox{M}\mbox{A} + \mbox{A}^\dagger \mbox{M}$.
\end{proof}

\begin{prop}\label{prop:2}
  For the energy norm, all properties associated with the instantaneous growth (e.g., the instantaneous growth rate as seen in Fig.~\ref{fig:fig2}) is $\delta$-independent.
\end{prop}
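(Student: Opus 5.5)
The plan is to work with the generalized eigenvalue formulation \eqref{eq:M_formulation} rather than with $\tilde{\mbox{S}}$. For $t\ll 1$, everything about the instantaneous growth is encoded in the pencil $\mbox{S}\mathbf{q}(0)=\sigma\mbox{M}\mathbf{q}(0)$ with $2\mbox{S}=\mbox{M}\mbox{A}+\mbox{A}^\dagger\mbox{M}$. This includes the growth rate (the largest $\sigma$), the optimal vector $\mathbf{q}(0)$, and hence $\Delta\epsilon$ and $\Gamma$ via \eqref{eq:opt_deps_gamma}. The energy-norm matrix $\mbox{M}$ in \eqref{eq:ene_norm} contains no $\delta$. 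It therefore suffices to show that $\mbox{S}$ is $\delta$-independent for this $\mbox{M}$. The statement then follows, since the whole eigenproblem, and so every derived quantity, no longer sees $\delta$. This is stronger than Proposition~\ref{prop:1}, which only controls $\Delta\epsilon$ and $\Gamma$; here the eigenvalue itself is also $\delta$-free.

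First, write $\mbox{A}=\mathrm{i}k\mbox{N}$ with $\mbox{N}$ real, read off from \eqref{eq:A_equ}. Then $\mbox{A}^\dagger=-\mathrm{i}k\mbox{N}^{T}$. Since $\mbox{M}$ is real symmetric, $\mbox{N}^{T}\mbox{M}=(\mbox{M}\mbox{N})^{T}$, and so
\begin{linenomath*}
\begin{align*}
2\mbox{S}=\mathrm{i}k\left[\mbox{M}\mbox{N}-(\mbox{M}\mbox{N})^{T}\right].
\end{align*}
\end{linenomath*}
This is $\mathrm{i}k$ times a real antisymmetric $2\times2$ matrix, i.e. a multiple of $\mbox{J}$. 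Only the scalar $(\mbox{M}\mbox{N})_{12}-(\mbox{M}\mbox{N})_{21}$ then needs to be computed.

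Second, set $s=\mu\sinh2\mu$ and $C=\cosh2\mu$, and compute the two entries directly. One has $(\mbox{M}\mbox{N})_{12}=s^{-1}\left[C/s+1-(1-\delta)C/s\right]=s^{-1}(1+\delta C/s)$. Similarly, $(\mbox{M}\mbox{N})_{21}=s^{-1}\left[-(1-C/s)-(1-\delta)C/s\right]=s^{-1}(-1+\delta C/s)$. The $\delta$ terms cancel in the difference, leaving $2/s$. Hence $\mbox{S}=(\mathrm{i}k/(\mu\sinh2\mu))\,\mbox{J}$, which is manifestly $\delta$-free.

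This cancellation is the only real step, and I expect it to be the main point to get right, since it is easy to mis-sign an entry of $\mbox{N}$. A useful cross-check is the pseudoenergy constant in \eqref{eq:const_of_motion}. The rate of change of the energy $\mathbf{q}^\dagger\mbox{M}\mathbf{q}$ equals the rate of change of $T^2+B^2/(1-\delta)$, which by \eqref{eq:amp_phase} is $2kTB\sin\Delta\epsilon\,(1+1)/s$. This is $\delta$-free, in agreement with the matrix computation. Finally, I would conclude as in Proposition~\ref{prop:1}. The pencil $(\mbox{S},\mbox{M})$ is independent of $\delta$, so the largest generalized eigenvalue (the instantaneous growth rate in Fig.~\ref{fig:fig2}) and its eigenvector are independent of $\delta$. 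Equivalently, $\tilde{\mbox{S}}=\mbox{M}^{-1/2}\mbox{S}\mbox{M}^{-1/2}$ is independent of $\delta$.
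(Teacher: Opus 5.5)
Your proof is correct and takes essentially the paper's route: both use the generalized pencil $2\mbox{S}=\mbox{M}\mbox{A}+\mbox{A}^\dagger\mbox{M}$, show that it is a multiple of $\mbox{J}$, and observe that the $\delta$-terms cancel for the energy norm. The paper computes the coefficient $\mathrm{i}k[2d+(c-d\cosh2\mu)(2-\delta)/(\mu\sinh2\mu)]$ for a general real symmetric $\mbox{M}$ with diagonal entries $c$ and off-diagonal $d$, and reads off the cancellation condition $c=d\cosh2\mu$; you specialize to the energy norm directly, your value $2/(\mu\sinh2\mu)$ agrees with it, and your antisymmetry argument and pseudoenergy cross-check are useful additions.
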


\begin{proof}
    For this case it is easier to deal with $\mbox{M}$ instead of $\mbox{M}^{1/2}$ through looking at $2\mbox{S} = \mbox{M}\mbox{A} + \mbox{A}^\dagger\mbox{M}$. Restricting our attention to norms with a $\mbox{M}^{1/2}$ with the same properties as in the previous proposition, then
    \begin{linenomath*}
    \begin{align*}
      \mbox{M}^{1/2} = 
        \begin{pmatrix}
          a & b \\ b & a
        \end{pmatrix} \qquad \Leftrightarrow \qquad
      \mbox{M} = 
        \begin{pmatrix}
          c & d \\ d & c
        \end{pmatrix}
    \end{align*}
    \end{linenomath*}
    where all entries are real by assumption. Computing the entries of $2\mbox{S}$ we find it takes the symplectic form as in the previous result, but with
    \begin{linenomath*}
    \begin{align*}
      \alpha_{\mbox{M}}(k, l, \delta) = \mathrm{i}k\left[2d + (c-d \cosh2\mu)\frac{2-\delta}{\mu\sinh 2\mu}\right],
    \end{align*}
    \end{linenomath*}
    so if $d=g$ and $c=g\cosh 2\mu$ up to some factor $g$ --- which is the case for the energy norm given in \eqref{eq:ene_norm} --- the resulting operator is $\delta$-independent. 
    
    The above argument goes through if we instead consider look at $2\tilde{\mbox{S}} = \tilde{\mbox{A}} + \tilde{\mbox{A}}^\dagger = \mbox{M}^{1/2}\mbox{A}\mbox{M}^{-1/2} + (\mbox{M}^{1/2}\mbox{A}\mbox{M}^{-1/2})^\dagger$, but the algebra is substantially more tedious.
\end{proof}

Physically, this $\delta$-independence in the CRW quantities for instantaneous growth may be rationalized in that, for the infinitesimal time limit, we are only interested in the configuration of the initial perturbation that leads to growth, and not at all on the mutual interaction, which is impacted by $\delta$. In that sense $\Delta\epsilon = \pi/2$ is something that we might expect already: since $v' = \partial\psi/\partial x = \mathrm{i}k \psi$ and $q \sim -\psi$, $v$ and $q$ is out of phase by $\pi/2$, and for growth in the enstrophy norm we essentially want maximum correlation between $q$ at one boundary from the CRW at that boundary with the implied $v$ at the same boundary arising from the CRW at the other boundary, which occurs at $\pi/2$. It is however interesting and not obvious \emph{a priori} that the energy norm has this property where the governing operator is $\delta$-independent. 

We should note that it is also entirely possible that maximum instantaneous growth has a subsequent growth that leads to large decay. The physical relevance and applicability in that sense depends ultimately on the time horizon and the choice of norm, which is of course consistent with the fact that transient growth as formulated is really one of an optimization problem, with solutions affected by the user defined objective or loss function and is context dependent.


\subsection{Finite time transient growth}\label{sec:finite_time}

We already know that the normal modes (which manifest and become the dominant feature in the $t\to\infty$ limit) are affected by the choice of $\delta$, so in this case we would expect that the transient growth to be affected by $\delta$, albeit increasingly less so for small time horizons. Here we demonstrate similar growth-decay effects for the instability characteristics of finite time transient growth. For this we first need to either derive analytical expressions for the matrix exponential $\tilde{\mbox{P}}(t) = \mbox{M}^{1/2}\exp[\tilde{\mbox{A}}t]\mbox{M}^{-1/2}$ (which can be done for simple cases, e.g., \cite{HeifetzMethven05} for the Rayleigh problem with $\mbox{M} = \mbox{I}$), or choose a time $t$ and numerically evaluate the matrix exponential; we take the latter approach here for simplicity. We compute the singular value decomposition (also done numerically here) and evaluate an effective growth rate as (cf. \cite{HeifetzMethven05})
\begin{linenomath*}
\begin{align}\label{eq:sig_eff}
  \sigma_{\rm eff}(t) = \frac{1}{t}\log \sigma_1(t),
\end{align}
\end{linenomath*}
where $\sigma_1(t)$ is the largest singular value from the singular value decomposition of the propagator for some choice of $t$. The computed growth rates for the three choices of $\delta$ considered so far for both the enstrophy and energy norm are shown in Fig.~\ref{fig:fig4}.

\begin{figure}
  \centering
  \includegraphics[width=0.9\linewidth]{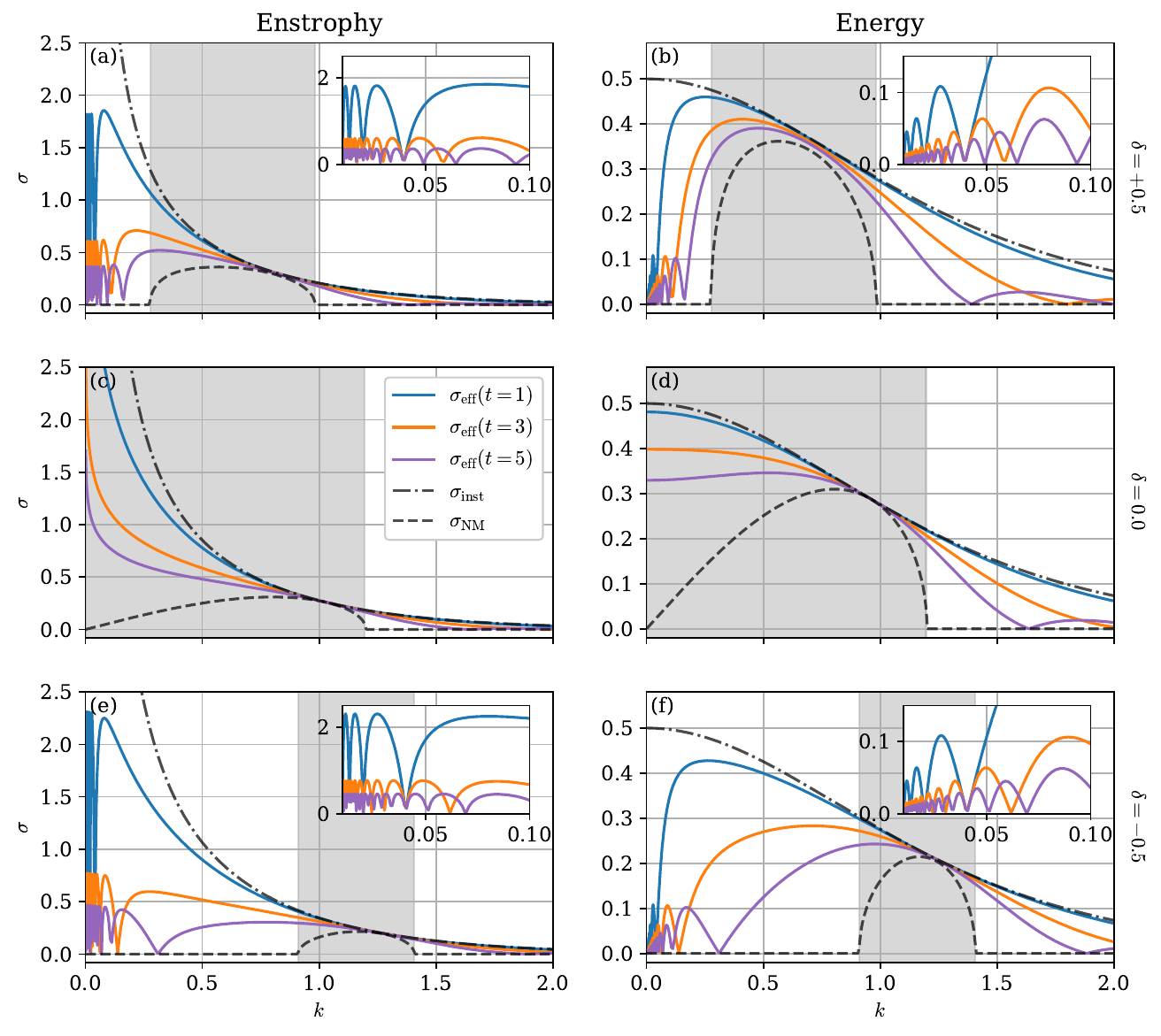}
  \caption{The effective optimal growth rate $\sigma_{\mathrm{eff}}(t)$, as a function of wavenumber $k$ (taking meridional wavenumber $l=0$) for three choices of $\delta$, for the (non-dimensional) target times $t=1$, $3$, $5$ (blue, orange and purple respectively), with insets zoomed in towards the small $k$ values where relevant. The normal mode growth rates and instantaneous growth rates are plotted for comparison purposes (black dashed and dot-dashed respectively), and the shaded regions indicate the region where there is normal mode instability.}
  \label{fig:fig4}
\end{figure}

Possibly a first thing to notice is that there is a `bouncing' of the finite time growth rates that is present for $\delta \neq 0$ for small $k$ values that is  absent for the $\delta = 0$ case. This is in fact not so much to do with the choice of $\delta$, but more a phenomenon to do with whether we are in the normal mode unstable bandwidth (a hint of this could be seen in panel $b,d,f$ for $\sigma_{\rm eff}(t=5)$; purple line). The bouncing behavior can be rationalized as the following: in the absence of normal mode instability, the CRWs do not phase-lock and are freely propagating, and they can mutually interact, which leads to growth when $\Delta\epsilon\in(0, \pi/2)$, but can decay when $\Delta\epsilon\in(-\pi/2, 0)$. The bouncing effect then simply comes from the choice of $t$ leading to some averaging of the growth and decay phases as the CRW pair cycle through the $\Delta\epsilon$ as they freely propagate, and is an artifact from the definition of $\sigma_{\rm eff}$; for example, $\sigma_{\rm eff}(t) = 0$ for some appropriately chosen $t$ could encompass a case where there is equal growth and decay from the transient interaction. For $k\searrow 0$ the `bouncing' occurs more frequently (see for example the insets in Fig.~\ref{fig:fig4}), which relates to the fact that CRWs here have a phase speed that goes like $k^{-1}$, so the CRWs propagate faster, and more cycle of growth and decay occur. An analogous decrease in the `bouncing' can be seen for $k\gg 1$ and relates to the waves becoming slower, so needing a longer time to complete the cycle (not shown here, but see for example that $t=5$ has one zero crossing on the right of the shaded region while the small $t$ cases do not). Similar behavior for example in the phase-shifts $\Delta\epsilon$ in the normal mode stable regions as `sawtooth' patterns (not shown here, but see e.g. Fig.~6$b$ of \cite{HeifetzMethven05}).

In all cases considered the finite time transient growth rates are bounded above by the instantaneous growth rate, and below by the normal mode growth rate. There is equality only when the governing operator $\tilde{\mbox{A}}$ or $\tilde{\mbox{A}}$ is normal, which is only the case when $\delta = 0$ for the numerically determined values values of $k$ reported in the previous subsection (closer examination of Fig.~\ref{fig:fig4}$a,e,f$ indicate there are no intersections of the curves). In the $k\to\infty$ setting the `bouncing' in the growth rate will have to decrease in magnitude since the instantaneous growth rate tends to zero in that limit. In the limit of $k\searrow0$, however, such bound does not hold, and while the `bouncing' does decrease in magnitude for the energy norm, there is some finite amplitude asymptote in the enstrophy norm for the $\delta \neq 0$ cases. In all cases for $\delta$, there does appear to be some sort of convergence of the finite time transient growth towards the normal mode growth in that the former's general amplitude is a decreasing function of $t\to\infty$ for all norms considered except in the enstrophy norm for $\delta=0$, where there seems to be some sort of singular behavior that leads to non-uniform convergence to the normal mode growth rate at $k\searrow0$.

As part of the singular decomposition for analyzing finite time transient growth we obtain the optimal initial conditions $\mathbf{q}(0; k, l, \delta, t) = (q_T(0), q_B(0))$, and we can compute the associated CRW properties. We show in Fig.~\ref{fig:fig5} the CRW properties associated with these optimal initial conditions as a function of varying $t$ on the appropriate phase portraits in $(\Delta\epsilon, \Gamma)$ phase space, obtained numerically from \eqref{eq:amp_phase_ratios} for sample but representative choices of chosen $\delta$ and $k$ ($\delta=+0.5, 0.0, -0.5$, with $k$ corresponding to the optimal normal mode growth of $\delta=+0.5, 0.0, -0.5$). Recalling the behavior of the phase portraits first, for the $\delta=0$ case (i.e. no slope and standard Eady problem, middle row), there is normal mode instability for these three choices of $k$, and the attractor of this system (denoted by the black circle) lies on $\Gamma = \pi/4$, i.e. top and bottom CRW have equal magnitude. With increasing $k$, the two fixed points come increasingly close towards $\Delta\epsilon = 0$, collide and bifurcate into two centers. For $\delta < 0$ (bottom row), the case where the topographic slope strengthens the bottom PV gradient, where there is normal mode instability it is such that $\Gamma < \pi/4$, i.e. the bottom wave is stronger (e.g., panel $i$), and vice-versa for $\delta > 0$ (top row, e.g., panels $a,b$). If there is no normal mode instability, then depending on whether we are to the left or right of the normal mode instability wavenumber bandwidth, we have centers lying on the line $\Delta\epsilon = \pm \pi$ (e.g., panels $g,h$) or $\Delta\epsilon = 0$ (e.g., panel $c$) respectively. In all cases, the optimal normal mode instability for fixed $\delta$ over all values of $k$ (panels $a,e,i$) is such that $\Delta\epsilon > \pi/2$, even if $\Delta\epsilon = \pi/2$ is the optimally interfering configuration (it is because we want to be in a \emph{hindering} configuration for phase-locking; see \cite{Mak-et-al24} for more details).

\begin{figure}
    \centering
  \includegraphics[width=1.0\linewidth]{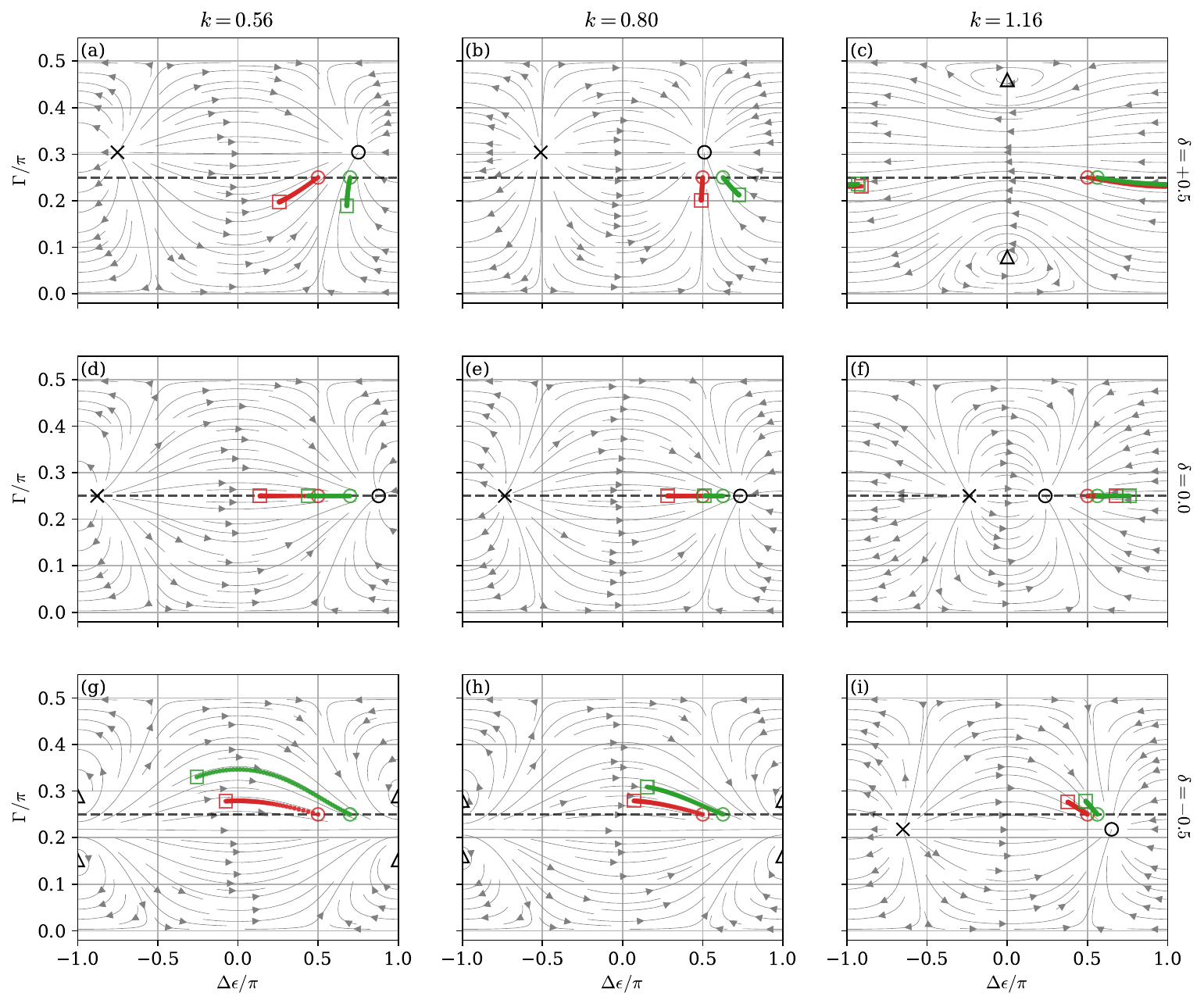}
  \caption{Phase portrait in ($\Delta\epsilon, \Gamma$) phase space and the fixed points of the dynamical system, for three values of $\delta$ (rows, $\delta=+0.5, 0.0, -0.5$ from top to bottom) and three choices of $k$ (columns, $k$ corresponding to the optimal normal mode growth of $\delta=+0.5, 0.0, -0.5$ from left to right). Stable and unstable fixed points (marked on as a circle and cross respectively) correspond to the growing and decaying normal mode respectively, and when there is no normal mode growth the fixed points are centers (marked on as triangles). The finite-time optimization with respect to the enstrophy and total energy are put on as markers (red and green respectively) for target times $t\in [0.001,5]$ (circles and squares denote the optimal initial states corresponding to the shortest and longest target times respectively). When $\Gamma = \pi/4$ the top and bottom CRW have equal amplitude, and is marked on as a horizontal gray dashed line for reference.}
  \label{fig:fig5}
\end{figure}

For the finite time transient instabilities, starting first with the $\delta = 0$ case (middle row), we see that all such optimal initial conditions (for all norms examined for this work) lie on the line $\Gamma = \pi/4$ also. This is perhaps to be expected, since in the $\delta=0$ case there is no system asymmetry between the top and the bottom wave as such, so there is no reason for the optimal initial condition to not have equal magnitudes. The increasing distance from the attractor and the instantaneous growth (noting the enstrophy norm one starts at $\Delta\epsilon = \pi/2$ from Proposition~\ref{prop:1} and Fig.~\ref{fig:fig3}$b$) is because the solution's evolution in phase space is governed \eqref{eq:amp_phase_ratios} and confined to be on the associated streamlines, so the choice of increasing $t$ for finite transient growth might be expected to be akin to back-tracing from the initial condition of the instantaneous growth in phase space along the streamline accordingly. With those observations in mind, we find that, for $\delta \neq 0$, if the normal mode instability is on one side of the $\Gamma = \pi/4$ line, the finite time optimal initial conditions (for all norms considered) always wants to be on the \emph{other} side of that $\Gamma = \pi/4$ line, i.e. in the $\delta<0$ case the normal mode has bottom CRW that is stronger, but for finite time transient instabilities we want to start with a condition that has a \emph{weaker} bottom CRW, and vice-versa for $\delta > 0$. In the cases where there is no normal mode instability, the finite time optimal instabilities still want to have the same ordering for the values of $t$ shown (weaker bottom CRW for $\delta < 0$ and vice-versa). This may be reasonable physically and from an optimization point of view: such an initial configuration allows one to get more gain in the growth by exciting a CRW over where the background PV gradient is stronger (although presumably norm dependent, e.g., if we define a norm that does not consider the quantity evaluated at the bottom boundary, then we would probably want the top CRW to be weak relative to the bottom CRW instead). A further observation we make based on numerical evidence seems to be that the phase-shift $\Delta\epsilon$ of the finite time optimal instability for a chosen $t$ is larger in the energy norm than the enstrophy norm, and we have not found cases where the finite time optimal initial solutions are non-unique; these properties could possibly be shown analytically (cf. extending the analytical approaches given in \cite{KalashnikChkhetiani18}), but is not something we have pursued.

One point we do want to make is regarding a subtlety with the $t\to\infty$ limit. For the cases where there is no normal mode instability (Fig.~\ref{fig:fig5}$c,g,h$), taking larger target times $t$ leads to `arcs' being traced out in the $(\Delta\epsilon, \Gamma)$ phase space, with one end point at the point in phase space corresponding to instantaneous growth, and the other on the $\Gamma = \pi/4$ line along the same streamline that the instantaneous growth initial condition is on (hence our rather specific choice of words in the previous paragraph). These effectively correspond again to freely propagating CRWs that has growth and decay phases that may offset each other (cf. Fig.~\ref{fig:fig4} with the `bounces' in the growth rate), and a choice of finite $t$ aims to find the configuration with the largest overall growth. On the other hand, for the cases where there are normal mode instabilities (Fig.~\ref{fig:fig5}$a,b,d,e,f,i$), the finite time optimal given by $t=5$ is effectively representative of the $t \gg 1$ calculations, which does \emph{not} coincide with the location of the normal mode instability in phase space. This is not a contradiction, just that some care is needed in the relevant statements. Suppose for some finite but large $t$ and a given norm, the normal mode configuration has some rate of gain based on some choice of norm over the time window $[0, t]$. Over the same time window, the optimal finite time configuration will have some other gain over the time window $[0, t_s] \cup [t_s, t]$, where $t_s$ is some time where the configuration is `close' (in a sense that could be formalized properly) to the normal mode one as the finite time optimal CRW configuration evolves towards the normal mode instability CRW configuration. While the rate of gain of the finite time optimal configuration is equal to that of the normal mode in the latter time window, there will be \emph{additional} gain in the former time window $[0, t_s]$ in the finite time case (otherwise it would not be the optimal solution). The finite time transient growth rates may tend to the normal mode growth rates, but it does not mean the corresponding eigenfunction or initial conditions converge to the other (abusing notation somewhat, the $t=\infty$ and $t\to\infty$ settings may have the same growth rates, but they need not correspond to the same solutions).

For completeness, we show samples of the streamfunction (which is proportional to pressure) associated with optimal finite time transient growth in Fig.~\ref{fig:fig6} (in this case for target time $t=5$ for the enstrophy norm): we construct the $\hat{\psi}_{T,B}$ from $\hat{q}_{T,B}$ using \eqref{eq:psi_wave}. Note that the PV is assumed to be a quantity localized on the top and bottom boundaries with associated phase-shifts and amplitude ratios relating to the CRWs, while the streamfunction has a vertical structure, and the linear superposition of $\hat{\psi}_{T,B}$ will lead to a structure that has a phase-tilt $\Delta\epsilon_{\rm eigen}$ and amplitude ratio $\Gamma_{\rm eigen}$ that does not in general coincide with that of the CRW phase-shift $\Delta\epsilon$ and amplitude ratio $\Gamma$ \cite{Mak-et-al24}. We see here for example that for the $\delta < 0$ (bottom row) there are more contours concentrated at the bottom boundary, and vice-versa for $\delta > 0$ (top row), consistent with our observations above that the finite-time transient growth is such that it wants to have a weaker bottom wave when $\delta < 0$ (and vice-versa for $\delta > 0$). At smaller $k$ (left column) the displayed streamfunctions have very small phase-tilts $\Delta\epsilon_{\rm eigen}$, somewhat in contrast to the corresponding phase-shifts $\Delta\epsilon$ where it is not that small (red squares in left column of Fig.~\ref{fig:fig5}). With increasing $k$ (left going to right columns) the phase-tilts (as well as the corresponding phase-shifts) seem to be increasing, in contrast to the normal modes (where they exist), which seem to want to \emph{decrease} in the phase-shift as well as the phase-tilt (Fig.~\ref{fig:fig4} here with the black circles, or Fig.~5 and 6$b$ of \cite{Mak-et-al24}). With increasing $k$, the streamfunctions become increasingly localized to the top and bottom boundaries, because of the behavior of the vertical structure function (e.g., Eq.~\ref{eq:psi_wave}) and shorter waves have a weaker interaction given the larger decay from the source. In the cases where there is normal mode instability (panels $a,b,d,e,f,i$), the streamfunction tilts against the shear, which is a feature that is seen and expected from the normal mode instabilities, while in the complementary case (panels $c,g,h$) the phase-tilts are close to zero or even tilting with the shear.

\begin{figure}
  \centering
  \includegraphics[width=1.0\linewidth]{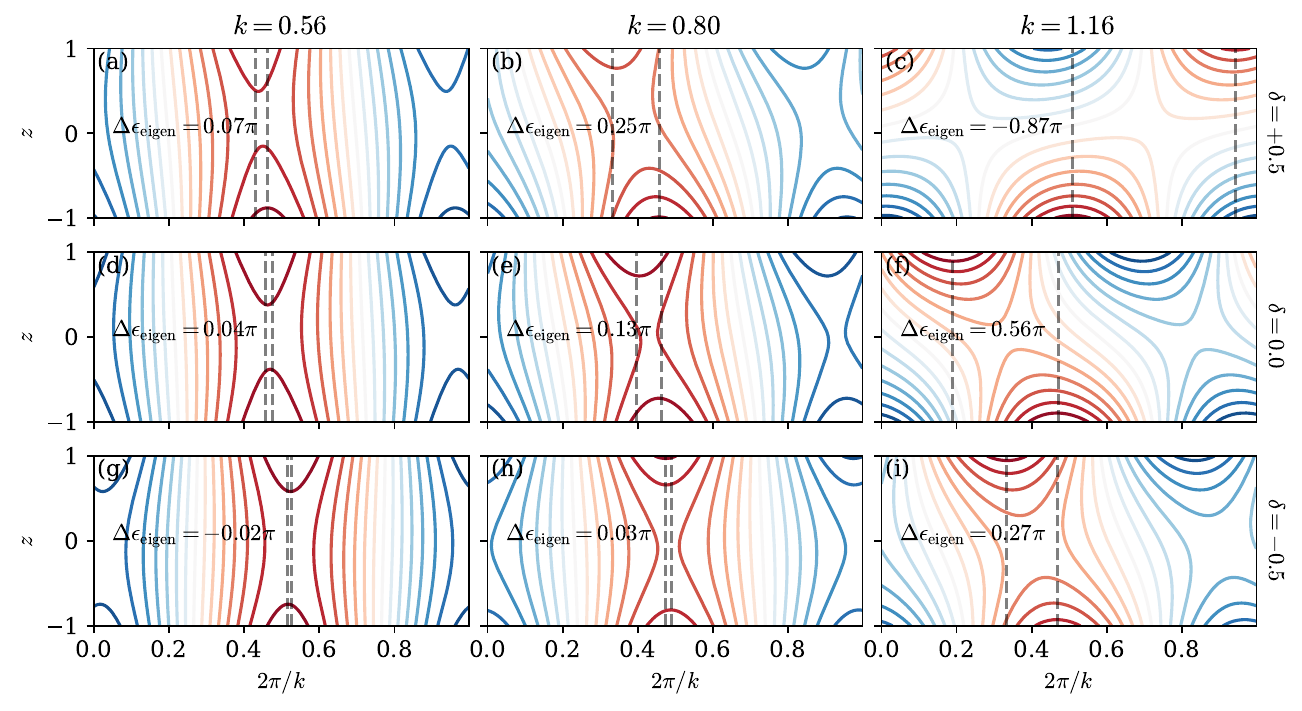}
  \caption{Streamfunction structure of the optimal perturbation at $t=5$ under the enstrophy norm for different $\delta$ (rows, $\delta=+0.5, 0.0, -0.5$ from top to bottom) and three choices of $k$ (columns, $k$ corresponding to the optimal normal mode growth of $\delta=+0.5, 0.0, -0.5$ from left to right). The corresponding phase tile $\Delta\epsilon_{\rm eigen}$ is indicated and is not equal to that of the CRW phase-shift $\Delta\epsilon$.}
  \label{fig:fig6}
\end{figure} 

While it is the optimal finite time streamfunctions for the enstrophy norm that is shown, the features described above qualitatively hold for the energy norm. The main observation is that the phase-tilts $\Delta\epsilon_{\rm eigen}$ in the energy norms seem to be larger or possibly equal in magnitude than the ones in the enstrophy norm for fixed $k$ and $\delta$, which is also an observation we have made for the phase-shifts $\Delta\epsilon$.



\section{Conclusion and outlooks}\label{sec:conclude}

In this work we investigated the characteristics of baroclinic instability with finite time transient growth in the sloped Eady problem and its representation in terms of Counter-propagating Rossby Waves (CRWs). The CRWs mechanism interprets instability as the constructive interference between a pair of CRWs, which depends crucially on the phase-shift $\Delta\epsilon$ and the magnitude ratios $\Gamma$. Our particular focus here are on the cases where the CRWs are asymmetric, with the bottom one controlled by a non-dimensional slope parameter $\delta$: if $\delta < 0$, the resulting slope is such that it \emph{strengthens} the bottom background potential vorticity gradient and leading to a \emph{stronger} bottom CRW for exponentially growing normal modes growth, vice-versa for $\delta > 0$, and reduces to the standard Eady problem with symmetric CRWs when $\delta=0$ \cite{Mak-et-al24}. For the finite time transient growth a choice of norm is required, and while there are in principle an infinity of possible choices, we have focused on quantities that are regarded to have general physical relevance --- namely the enstrophy norm, energy norms (total, kinetic, available potential) and boundary pressure norms. We only present the results for enstrophy and total energy norms since the results in those two norms are qualitatively representative of the others that we have considered. 

For the instantaneous growth (infinitesimal time limit), we find that the CRW quantities are independent of the asymmetry parameter $\delta$, and the CRWs have equal amplitudes (amplitude ratio of unity or $\Gamma = \pi/4$ here) if the norms are real and symmetric (the precise conditions are given in Proposition~\ref{prop:1}), in contrast to the results for the normal mode growths reported in \cite{Mak-et-al24} and reviewed in this work. This $\delta$-independence has its mathematical origins in the fact the operator associated with the relevant eigenvalue problem is symplectic in form even if it is $\delta$-dependent, leading to certain cancellations, and can be physically rationalized as the fact that, for instantaneous growth, the interaction and phase-locking aspect between the CRWs is less important than their initial configuration. Additionally, we find that the energy norm is somehow special in that the associated instantaneous growth rate remains well-defined in the $k\searrow 0$ limit for all choices of $\delta$ (the other norms considered here have unbounded growth rates in the same limit generically), and the instability characteristics is in fact completely $\delta$-independent (see Proposition~\ref{prop:2}). The latter observation relates to the governing operator for the eigenvalue problem having no $\delta$-dependence whatsoever. We find also that, numerically, the instantaneous growths are larger than the corresponding normal mode exponential growths (although the instantaneous growths are not sustained over time), but they only ever seem to agree at one particular value of the zonal wavenumber $k$ in the $\delta = 0$ case (with different values of $k$ depending on choice of norm), at least for the norms we have investigated in this work. The location where the two growth rates agree for a choice of norm relates to when the governing operator is normal (i.e., when the governing operator $\tilde{\mbox{A}}$ and its Hermitian conjugate $\tilde{\mbox{A}}^\dagger$ commute), and seems to indicate that for $\delta \neq 0$ the operator is generally \emph{non}-normal. This is likely a result that can be proven (e.g., showing that $\tilde{\mbox{A}}\tilde{\mbox{A}}^\dagger - \tilde{\mbox{A}}^\dagger \tilde{\mbox{A}}$ has a norm bounded above zero or that the components cannot be simultaneously zero when $\delta \neq 0$), although this was not something we have pursued beyond the trivial case (where $\mbox{M} = \mbox{I}$). We note that the value of $k$ where the operator is normal does not coincide with the value of $k$ of the optimal normal mode, at least for the choices of norm we have considered; it is presumably plausible to construct a norm such that there is coincidence (or show it cannot exist), but is also not something we have pursued in the most general case.

For finite time transient growth, we find that the effective growth rate (defined in Eq.~\ref{eq:sig_eff}) is bounded above by the instantaneous growth rate and below by the normal mode growth rate. Consistent with that observation, where the operator is normal for the choice of norm, all growths (instantaneous, finite time and exponential normal mode) are equal, but this only seem to occur for the $\delta = 0$ case in our numerical investigation (see Fig.~\ref{fig:fig4} here). The effective growth rates for finite time optimal growth for $\delta \neq 0$ have a `bouncing' behavior, where in the $k\searrow 0$ limit the growth rates oscillate between zero and some value (some constant in the enstrophy norm, some value decreasing to zero in the energy norm), with decreased spacing in $k$ between the `bounces'. Rather than it being a $\delta \neq 0$ behavior, we rationalize this as a phenomenon when there is no normal mode instability. There is no attractor in the system in such a regime, so the CRWs are essentially freely propagating, and depending on the choice $t$ and $k$ can lead to phases with constructive as well as destructive interference. The `bouncing' in the growth rate is then simply a compensation of those two phases and depends on the target time $t$ and $k$: for fixed $k$ and increasing $t$, the CRWs have more time to complete the cycles, while for fixed $t$ and \emph{decreasing} $k$ the CRWs travel faster (since these are Rossby waves with phase speed scaling like $k^{-1}$), so more cycles get completed for a given time. While not shown, the same bouncing effect is also present for $k\gg 1$ (where the `bounces' are progressively further apart in $k$ for the converse reason stated just now), and similar patterns can be seen in the CRW quantities (cf. `sawtooth' patterns in Fig.~6$b$ of \cite{HeifetzMethven05}). In terms of CRW amplitude ratios, for $\delta < 0$ the optimal initial conditions for finite time transient growth generically wants to have a \emph{weaker} bottom CRW, and vice-versa for $\delta > 0$, and is in the `opposite' sense to the normal mode CRW configurations. This was rationalized as a result of a larger gain in some choice fo norm that can result by growing a CRW over a location with stronger potential vorticity gradient, and vice-versa. A corollary observation is then the optimal initial conditions for finite time transient growth generally do \emph{not} coincide with the configuration where there is normal mode exponential growth in the $t\to\infty$ limit, and care is needed for the interpretation: when there is an attractor (in this case the normal mode configuration), it is certainly true the finite time optimal configuration will evolve \emph{towards} that in the long time limit, but that does not mean the configuration associated with the normal mode is optimal in the chosen norm for \emph{finite} target times (unless the operator happens to be normal).

In the present work we have shown results only for the sloped Eady problem, for the case where the meridional wavenumber $l=0$, and only for the enstrophy and total energy norms, partly for comparison with a previous that employs utilizes the same model for normal mode instability \cite{Mak-et-al24}. It was briefly reported that the instability characteristics for the available potential energy norm is qualitatively similar to that reported for the enstrophy norm, while those for kinetic energy and boundary pressure norm were qualitatively similar to the total energy norm; the reader can see additional results in the analysis code provided. All our qualitative findings seem to hold for the $l > 0$ case (which modifies the value of $\mu$), and a non-zero $l$ seems to only shift the normal mode instability regions (cf. \cite{Mak-et-al24}) with corresponding quantitative changes to the transient growth behavior. Further, all qualitative findings, such as the $\delta$-invariance in the CRW quantities for a certain class of norms, a seemingly special role of the total energy norm, and the CRW amplitude ratios for optimal finite time initial conditions all hold in an analogous sloped Phillips problem (two layer quasi-geostrophic system with a vertical shear; see Appendix~\ref{app:A} for the formulation). While our analysis has focused on the $\beta = 0$ case with an analogous $\delta \neq 0$ as that used in the sloped Eady problem here, a preliminary investigation finds the results also hold in the $\beta > 0$ but $\delta=0$ case (where the planetary $\beta$ introduces additional asymmetry in the CRW interaction by simultaneously making the westward/retrograde CRW propagate faster in the top layer as well as reduce the bottom eastward/prograde CRW's phase speed in the bottom layer).

The present work on transient instability focuses on a baroclinic system that can be completely described in terms of CRWs and that there are a low number of CRWs present in the system. Notable extensions could involve an analogous investigation into the optimal excitation of CRWs when the present sloped Eady problem does not have zero interior potential vorticity (e.g., when planetary $\beta$ is not zero), or cases with internal/continuous shear and thus potential vorticity reversals in the interior, which is often argued to be of more physical relevance in the ocean \cite[e.g.,][]{Isachsen11, Sterl-et-al24, Sterl-et-al25, Lobo-et-al25, PudigSmith25, DengWang26}. In those cases the CRWs are no longer small in number, nor are the locations of the dominant CRWs known \emph{a priori} (although the locations of large potential vorticity gradients and critical layers are likely candidates), there are still known procedures to investigate the problem analytically and numerically \cite[e.g.,][]{Heifetz-et-al04a, HeifetzMethven05}, and often low-order systems involving a smaller number of dominant CRWs can be derived that captures the essence of the problem \cite[e.g.,][]{Heifetz-et-al04b, DeVriesOopsteegh07a, DeVriesOopsteegh07b, DeVries-et-al09, Heifetz-et-al15}. Analogous investigations, extensions thereof and its interpretation in terms for CRWs can in principle be considered in that of stratified shear flows \cite[e.g.,][]{Rabinovich-et-al11, GuhaLawrence14} or in magnetohydrodynamic systems \cite[e.g.,][]{Heifetz-et-al15, Fraser-et-al26}, and it would be interesting to see what features reported here hold in the other systems with different physics.

The results found here for transient growth for the sloped Eady problem also holding in an analogous Phillips problem is intriguing, and one wonders whether this is related to something more fundamental common to those systems. One speculation we make here is that the observations may be related to the underlying parity-time ($\mbox{PT}$) symmetry of the system, which was shown for the Phillips problem in the work of \cite{David-et-al22}, and in the (sloped) Eady problem in \cite{Mak-et-al24}. As noted in \cite{Mak-et-al24}, there seems to be a close link between properties associated with $\mbox{PT}$ symmetry (Krein signatures, Krein collisions at exceptional points, spontaneous breaking of $\mbox{PT}$ symmetry and appearance modes with conjugate eigenvalues), linear instability (positive/negative energy modes, marginal stability points, appearance of normal mode instabilities with conjugate eigenvalues) and CRWs (positive/negative wave action, bifurcation points, phase-locking with constructive interference); some of these features can be described by a Hamiltonian formulation of the present problem and phrased in terms of wave actions, and we refer the reader to Appendix~\ref{app:B} for this. Given the stated link, it would if anything be surprising if the discrete symmetries have nothing to do with the reported properties relating to the instability characteristics. Our preliminary investigation based on the settings investigated in \cite{David-et-al22} have found that in the two cases where other discrete symmetries hold, such properties reported here do \emph{not} hold, although something else does. Related investigations in the Phillips setting and in the magnetohydrodynamic settings are underway and will be reported elsewhere; it would be interesting to see what features of the instability problem are general and common because of the mathematical structures present in the systems, despite the differences in the underlying physics.





%
%


%


\begin{acknowledgments}


This work was supported by the the RGC General Research Fund 16303625. The Python Jupyter notebook used to generate and analyze data for the sloped Eady problem may be found at \url{https://github.com/julianmak/julianmak.github.io/blob/master/files/Eady/Eady_transient_analysis.ipynb}. An analogous Python Jupyter notebook for a sloped Phillips problem may be found at \url{https://github.com/julianmak/julianmak.github.io/blob/master/files/2LQG/Phillips_transient_analysis.ipynb}.

Authorship is alphabetical after the first author. Resources, Supervision, Project administration, Funding acquisition: JM. Conceptualization: JM, EH. Methodology: JY, EH, JM. Visualization, Software, Formal analysis, Writing - Original Draft: JY, JM. Validation, Writing - Review \& Editing: everyone.


\end{acknowledgments}


\appendix

\section{Sloped Phillips problem}\label{app:A}

Based on the set up in \cite{David-et-al22} and \cite{ChenKamenkovich13} and using a notation analogous to that in the main body of this work, the dimensional linearized two-layer QG equations with equal layer depth on a $\beta$-plane (with $f = f_0 + \beta y$) over a uniformly sloping bottom are given by
\begin{linenomath*}
\begin{subequations}\begin{align} 
  \left( \frac{\partial}{\partial t} + U \frac{\partial}{\partial x} \right) q_T + \left( \beta + \frac{8}{L_d^2} U \right) \frac{\partial \psi_T}{\partial x} &= 0,\\ 
  \left( \frac{\partial}{\partial t} - U \frac{\partial}{\partial x} \right) q_B + \left( \beta + \frac{f_0}{D}\frac{\partial H_b}{\partial y}- \frac{8}{L_d^2} U \right) \frac{\partial \psi_B}{\partial x} &= 0,
\end{align}\end{subequations}
\end{linenomath*}
with the potential vorticity defined by
\begin{linenomath*}
\begin{subequations}\begin{align}
  q_T &= \nabla^2\psi_T +\beta y + \frac{4}{L_d^2}(\psi_B-\psi_T),\\
  q_B &= \nabla^2\psi_B +\beta y + \frac{4}{L_d^2}(\psi_T-\psi_B) + \frac{f_0}{D}H_b(x,y).
\end{align}\end{subequations}
\end{linenomath*}
The subscripts $T$ and $B$ denote the top and bottom layer variables, $\psi_{T,B}$ is the streamfunction, $D$ is the layer depth, and $H_b(x,y)$ represents the bottom topography. Scaling by the advective time-scale and considering the non-dimensional variables (with the tildes on top) $\tilde{t} = (U/L_d) t$, $\tilde{k} = k L_d$ and $\tilde{K}^2 = K^2 L_d^2 = (k^2 + l^2) L_d^2$, writing the resulting system in terms of $\hat{q}_{T,B}$ instead of the usual $\psi_{T,B}$ (e.g., \cite{David-et-al22}) leads to the non-dimensional system (dropping all tildes for convenience)
\begin{linenomath*}
\begin{subequations}\begin{align}
  \left( \frac{\partial}{\partial t} + \mathrm{i}k \right) \hat{q}_T - \frac{4\mathrm{i}k(\gamma + 2)}{K^2(K^2 + 8)} \left[ \left(4 + K^2\right) \hat{q}_T + 4\hat{q}_B \right] &= 0,\\
  \left( \frac{\partial}{\partial t} - \mathrm{i}k \right) \hat{q}_B - \frac{4\mathrm{i}k(\gamma + \delta - 2)}{K^2(K^2 + 8)} \left[ 4\hat{q}_T + \left(4 + K^2\right) \hat{q}_B \right] &= 0,
\end{align}\end{subequations}
\end{linenomath*}
where $\delta = (f_0 L_d^2/(4UD))\partial H_b / \partial y$ is a non-dimensional parameter relating to the bottom slope, and $\gamma = \beta L_d^2 / (4U)$ is a criticality parameter relating to the flow shear needed to overcome the stabilizing effect of $\beta$ . The governing operator $\mbox{A}$ analogous to that of \eqref{eq:A_equ} would be
\begin{linenomath*}
  \begin{equation}
  \mathrm{A} = \frac{-\mathrm{i}k}{K^2(K^2 + 8)}
  \begin{pmatrix}
    K^4-4\gamma(K^2+4)-32 & -32-16\gamma \\
    32-16(\gamma+\delta) & -K^4-4(\gamma+\delta)(K^2+4)+32
  \end{pmatrix}.
  \end{equation}
\end{linenomath*}
Note again the governing system is $\mbox{PT}$ symmetric (as was shown and investigated by \cite{David-et-al22}) in the sense highlighted here (one minus sign from $k$ that compensates with a minus sign from the $\mathrm{d}/\mathrm{d}t$, with none from $K^2$), even in the presence of the $\gamma$ and $\delta$ parameter that gives rise to asymmetry.


\section{Wave activity invariants and a Hamiltonian formulation}\label{app:B}

The pseudomomentum and pseudoenergy and its partition in terms of CRWs associated with the Eady problem seems to have been partially or implicitly implied in previous works but not stated explicitly, and we take this opportunity to do so here. Using the bracket notation to again denote an integral (with appropriate complex conjugates, following some of the notation of \cite{Heifetz-et-al04a} but taking a different sign convention (counter to \cite{Heifetz-et-al04a, Held85}), the pseudomomentum $\mathcal{P}$ and pseudoenergy $\mathcal{H}$ can be shown to be (noting that we include a factor of $1/2$ in our definition of the bracket)
\begin{linenomath*}
\begin{align}
  \mathcal{P} = -\left\langle \eta, \frac{\partial \overline{Q}}{\partial y}\eta \right\rangle, \qquad \mathcal{H} = E + \left\langle \eta,  (\overline{U} - \overline{U}_m)\frac{\partial \overline{Q}}{\partial y} \eta \right\rangle,
\end{align}
\end{linenomath*}
where $E$ is the total energy in terms of the perturbation quantities, and $\overline{U}_m$ is some reference value of the base state velocity. The key variable is $\eta$ the meridional displacement associated with $v = \partial\psi / \partial x$, and has the relation $q = \eta (\partial \overline{Q} / \partial y)$.

In terms of CRWs, starting first with $\mathcal{P}$, we define
\begin{linenomath*}
\begin{align}
  \mathcal{P}_{ij} = -\left\langle \eta_i, \frac{\partial \overline{Q}}{\partial y} \eta_j \right\rangle = -\left\langle \frac{q_i}{\partial \overline{Q} / \partial y}, q_j \right\rangle,
\end{align}
\end{linenomath*}
with $\{i, j\} = \{T ,B\}$ depending on whether it is the top or bottom CRW we are talking about. In the general case for the CRWs $q$ is modeled to be delta-functions, so $\mathcal{P}_{ij} = 0$ by construction if $i\neq j$, which is orthogonality in the pseudomomentum inner product \cite[cf.,][]{Held85}. In the Eady system $\partial\overline{Q} / \partial y = 0$ everywhere except at the top and bottom boundaries, and in our case we simply have (in the non-dimensionalized form) $\partial\overline{Q} / \partial y = 1, -(1-\delta)$ respectively, so $\mathcal{P}_{TT} = -T^2/2$ and $\mathcal{P}_{BB} = +B^2/[2(1-\delta)]$. Total $\mathcal{P}$ conservation in this case implies that (cf. Eq.~25 of \cite{Heifetz-et-al04a})
\begin{linenomath*}
\begin{align}\label{eq:pseudo-mo}
  \mathcal{P} = \mathcal{P}_T + \mathcal{P}_B = -\frac{T^2}{2} + \frac{B^2}{2(1-\delta)} = \textnormal{constant},
\end{align}
\end{linenomath*}
which is the first constant of motion in \eqref{eq:const_of_motion}. The first and second term here can be regarded as the pseudomomentum of the top and bottom CRW respectively, and there is a clean separation because of the imposed orthogonality. The sign convention taken here means that the top west/leftward CRW (propagating counter to the east/rightward mean flow) has \emph{negative} pseudomomentum, and vice-versa for the bottom wave.

Proceeding to $\mathcal{H}$, we define the brackets
\begin{linenomath*}
\begin{align}
  \mathcal{S}_{ij} = -\left\langle \eta_i, (\overline{U} - \overline{U}_M)\frac{\partial \overline{Q}}{\partial y} \eta_j \right\rangle = -\left\langle \frac{q_i}{\partial \overline{Q} / \partial y}, (\overline{U} - \overline{U}_M) q_j \right\rangle,
\end{align}
\end{linenomath*}
where $\overline{U}_m$ is an appropriate reference zonal velocity, and $S_{ij} = 0$ for $i\neq j$. In the present Eady problem, because our choice of symmetrizing the problem and non-dimensionalization, we have $\overline{U} = \pm1$ at the top and bottom, and we can take $\overline{U}_m = 0$, and so $\mathcal{S}_{TT} = -T^2/2$ while $\mathcal{S}_{BB} = -B^2 / [2(1-\delta)]$ (note that resulting signs). The expression for $E$ in terms of CRWs come from the expression for $\mathbf{q}^\dagger \mbox{M} \mathbf{q}$ with $\mbox{M}$ from \eqref{eq:ene_norm}, which is
\begin{linenomath*}
\begin{align*}
  E = \frac{1}{2}\frac{(T^2 + B^2)\cosh 2\mu + 2TB \cos\Delta\epsilon}{\mu \sinh 2\mu},
\end{align*}
\end{linenomath*}
and so the pseudoenergy is
\begin{linenomath*}
\begin{align}\label{eq:pseudo-ene}
  \mathcal{H} = \frac{1}{2}\left[\frac{(T^2 + B^2)\cosh 2\mu + 2TB \cos\Delta\epsilon}{\mu \sinh 2\mu} - \left(T^2 + \frac{B^2}{1-\delta}\right)\right] = \textnormal{const},
\end{align}
\end{linenomath*}
which gives the second constant of motion in \eqref{eq:const_of_motion}.

Note that these quantities are generally sign-indefinite, but under conditions where they are sign-definite these correspond to the Arnol'd type stability conditions: this occurs in the sloped Eady probably for example when $\delta > 1$ and $\mathcal{P} < 0$, which corresponds to the case where both CRWs are leftward traveling, and there is an analogous condition for $\mathcal{H}$ to be sign-definite that depends on the wavenumber $\mu$. While we have diagnosed how the pseudomomentum and pseudoenergy evolve in instability growth for the present Eady problem, we have not found any particular feature that is particularly illuminating for the present problem. One possible reason for this is perhaps the problem is sufficiently simple such that the constraints ends up implying that there is nothing too complex that can occur. In cases beyond the Eady problem the corresponding forms are presumably not so simple and the evolution of the wave activities may be prove to be fruitful (e.g., the Charney problem considered in \cite{Heifetz-et-al04b}, possibly for the MHD problem considered in \cite{Heifetz-et-al15}).

We note that if we define the action as
\begin{linenomath*}
\begin{align}
  J_T = -\mathcal{P}_T = \frac{T^2}{2}, \qquad J_B = +\mathcal{P}_B = \frac{B^2}{2(1-\delta)}
\end{align}
\end{linenomath*}
and take the angle to be respective phases, then with the re-scaled pseudoenergy \eqref{eq:pseudo-ene} as the Hamiltonian given by
\begin{linenomath*}
\begin{align}
  H = \frac{k\mathcal{H}}{2} = \frac{k}{2}\left[\left(\frac{\cosh 2\mu}{\mu \sinh 2\mu} - 1\right)J_T + \left(\frac{(1-\delta)\cosh 2\mu}{\mu \sinh 2\mu} - 1\right)J_B + \frac{2\sqrt{1-\delta}}{\mu\sinh 2\mu} \sqrt{J_T J_B} \cos\Delta\epsilon\right],
\end{align}
\end{linenomath*}
the amplitude-phase equations in \eqref{eq:amp_phase} are equivalent to
\begin{linenomath*}
\begin{align}
  \frac{\partial \epsilon_T}{\partial t} = +\frac{\partial H}{\partial J_T}, \quad
  \frac{\partial J_T}{\partial t} = -\frac{\partial H}{\partial \epsilon_T}, \qquad
  \frac{\partial \epsilon_B}{\partial t} = -\frac{\partial H}{\partial J_B}, \quad
  \frac{\partial J_B}{\partial t} = +\frac{\partial H}{\partial \epsilon_B}.
\end{align}
\end{linenomath*}
Here the conservation of $\mathcal{P} = -J_T + J_B$ corresponds to the continuous symmetry $\epsilon_{T,B} \mapsto \epsilon_{T,B} + \theta$. The corresponding Poisson bracket here is diagonal, but the Hamiltonian has an interaction term $\sqrt{J_T J_B} \cos\Delta\epsilon$.

It can be shown there is a set of symmetrized variable where a corresponding Hamiltonian is diagonal instead, and can be related to the positive/negative energy waves description \cite[e.g.,][]{Cairns79}, but the analogous Poisson bracket there generically has an interaction term. This fact relates to the Hamiltonian being sign-indefinite and having a hyperbolic normal form, and a unstable Krein pair cannot be reduced to independent oscillators but only to conjugate pairs, where either the Hamiltonian or the Possion bracket is diagonal. The Hamiltonian picture is perhaps more natural for studying the consequences of $\mbox{PT}$ symmetry on the system in the present and analogous systems governed by non-Hermitian operators, and is something that we will explore in the future.


\section{Other choices of norms}\label{app:C}

The following norms based on physical quantities (motivated by those studied in \cite{KalashnikChkhetiani18}) were also considered in the analysis as part of the present investigation, although the results were not presented in the main article given the qualitative similarity with the enstrophy and energy norms. Using the bracket notation to again denote an integral (with appropriate complex conjugates):
\begin{itemize}
  \item Kinetic energy norm (with solutions qualitatively like the energy norm),
  \begin{linenomath*}
  \begin{align}
    \mbox{M} = \begin{pmatrix}
    \cfrac{\cosh 2\mu}{2\mu\sinh 2\mu} + \cfrac{1}{\sinh^2 2\mu} &
    \cfrac{\cosh 2\mu}{\sinh^2 2\mu} + \cfrac{1}{2\mu\sinh 2\mu} \\
    \cfrac{\cosh 2\mu}{\sinh^2 2\mu} + \cfrac{1}{2\mu\sinh 2\mu} &
    \cfrac{\cosh 2\mu}{2\mu\sinh 2\mu} + \cfrac{1}{\sinh^2 2\mu}
    \end{pmatrix} \qquad \Leftrightarrow \qquad K = \left\langle \nabla \psi, \nabla \psi \right\rangle.
  \end{align}
  \end{linenomath*}
  
  \item Available potential energy norm (with solutions qualitatively like the enstrophy norm),
  \begin{linenomath*}
  \begin{align}
    \mbox{M} = \begin{pmatrix}
    \cfrac{\cosh 2\mu}{2\mu\sinh 2\mu} - \cfrac{1}{\sinh^2 2\mu} &
    -\cfrac{\cosh 2\mu}{\sinh^2 2\mu} + \cfrac{1}{2\mu\sinh 2\mu} \\
    -\cfrac{\cosh 2\mu}{\sinh^2 2\mu} + \cfrac{1}{2\mu\sinh 2\mu} &
    \cfrac{\cosh 2\mu}{2\mu\sinh 2\mu} - \cfrac{1}{\sinh^2 2\mu}
    \end{pmatrix} \qquad \Leftrightarrow \qquad P = \left\langle \frac{\partial \psi}{\partial z}, \frac{\partial \psi}{\partial z} \right\rangle,
  \end{align}
  \end{linenomath*}
  and is the difference between the $\mbox{M}$ for total energy given in \eqref{eq:ene_norm} and the kinetic energy $\mbox{M}$ given above.
  
  \item Boundary pressure (with solutions qualitatively like the energy and kinetic energy norm),
  \begin{linenomath*}
  \begin{align}
    \mbox{M} = \frac{1}{(\mu\sinh 2\mu)^2}
      \begin{pmatrix}
      \cosh^2 2\mu + 1 & 2\cosh 2\mu \\
      2\cosh 2\mu & \cosh^2 2\mu + 1
      \end{pmatrix} \qquad \Leftrightarrow \qquad \hat{P} = \left\langle \psi(z=+1), \psi(z=+1) \right\rangle + \left\langle \psi(z=-1), \psi(z=-1) \right\rangle.
  \end{align}
  \end{linenomath*}
\end{itemize}

All of the above norms satisfy the conditions of Proposition~\ref{prop:1} and are $\delta$-invariant in the stated fashion.





%

\end{document}